\documentclass[final,5p,times]{elsarticle}

\usepackage{soul,framed}  
\usepackage[monochrome]{xcolor}

\usepackage{algorithm}
\usepackage{algpseudocode}

\colorlet{shadecolor}{yellow} 
\graphicspath{{../pdf/}{../jpeg/}}
\DeclareGraphicsExtensions{.pdf,.jpeg,.png}
\usepackage{amsfonts}
\usepackage{amssymb}
\usepackage{amsthm}

\usepackage{amsmath}
\usepackage{pifont}
\usepackage{balance}
\usepackage{longtable}

\newtheorem{thm}{Theorem}
\newtheorem{lem}[thm]{Lemma}

\theoremstyle{definition}

\usepackage{array}
\usepackage{mdwmath}
\usepackage{mdwtab}
\usepackage{eqparbox}
\usepackage{url}
\usepackage{booktabs}
\usepackage{tabularx}  
\usepackage{subcaption}  

\usepackage{flushend}

\newdimen\slantmathcorr
\def\oversl#1{
\setbox0=\hbox{$#1$}
\slantmathcorr=\wd0
\hskip 0.2\slantmathcorr \overline{\hbox to 0.8\wd0{%
\vphantom{\hbox{$#1$}}}}
\hskip-\wd0\hbox{$#1$}
}

\def\undersl#1{
\setbox0=\hbox{$#1$}
\slantmathcorr=\wd0
\underline{\hbox to 0.8\wd0{
\vphantom{\hbox{$#1$}}}}
\hskip-0.8\wd0\hbox{$#1$}
}

\newcommand{\threshold}{\mathrm{threshold}}
\newcommand*\chem[1]{\ensuremath{\mathrm{#1}}}

\hyphenation{exo-thermic}
\setlength{\mathindent}{0cm}

\def\tsc#1{\csdef{#1}{\textsc{\lowercase{#1}}\xspace}}
\tsc{WGM}
\tsc{QE}

\begin{document}
\let\WriteBookmarks\relax
\def\floatpagepagefraction{1}
\def\textpagefraction{.001}

\title {BattBee: Equivalent Circuit Modeling and Early Detection of Thermal Runaway Triggered by Internal Short Circuits for Lithium-Ion Batteries}  

\author[label1]{Sangwon~Kang} \author[label1]{Hao~Tu} \author[label2]{Huazhen~Fang\corref{cor1}}
\address[label1]{Department of Mechanical Engineering, University of Kansas}
\address[label2]{Department of Mechanical Engineering, Michigan State University}
 \cortext[cor1]{Corresponding author. E-mail address: hfang@msu.edu.}

\begin{abstract}
Lithium-ion batteries are the enabling power source for transportation electrification. However, in real-world applications, they remain vulnerable to internal short circuits (ISCs) and the consequential risk of thermal runaway (TR). Toward addressing the challenge of ISCs and TR, we undertake a systematic study that extends from dynamic modeling to fault detection in this paper. First, we develop {\em BattBee}, the first equivalent circuit model to specifically describe the onset of ISCs and the evolution of subsequently induced TR. Drawing upon electrochemical modeling, the model can simulate ISCs at different severity levels and predict their impact on the initiation and progression of TR events. With the physics-inspired design, this model offers strong physical interpretability and predictive accuracy, while maintaining structural simplicity to allow fast computation. Then, building upon the BattBee model, we develop fault detection observers and derive detection criteria together with decision-making logics to identify the occurrence and emergence of ISC and TR events. This detection approach is principled in design and fast in computation, lending itself to practical applications. Validation based on simulations and experimental data demonstrates the effectiveness of both the BattBee model and the ISC/TR detection approach. The research outcomes underscore this study's potential for real-world battery safety risk management. 
\end{abstract}

\begin{graphicalabstract}
\includegraphics[width=1\textwidth]{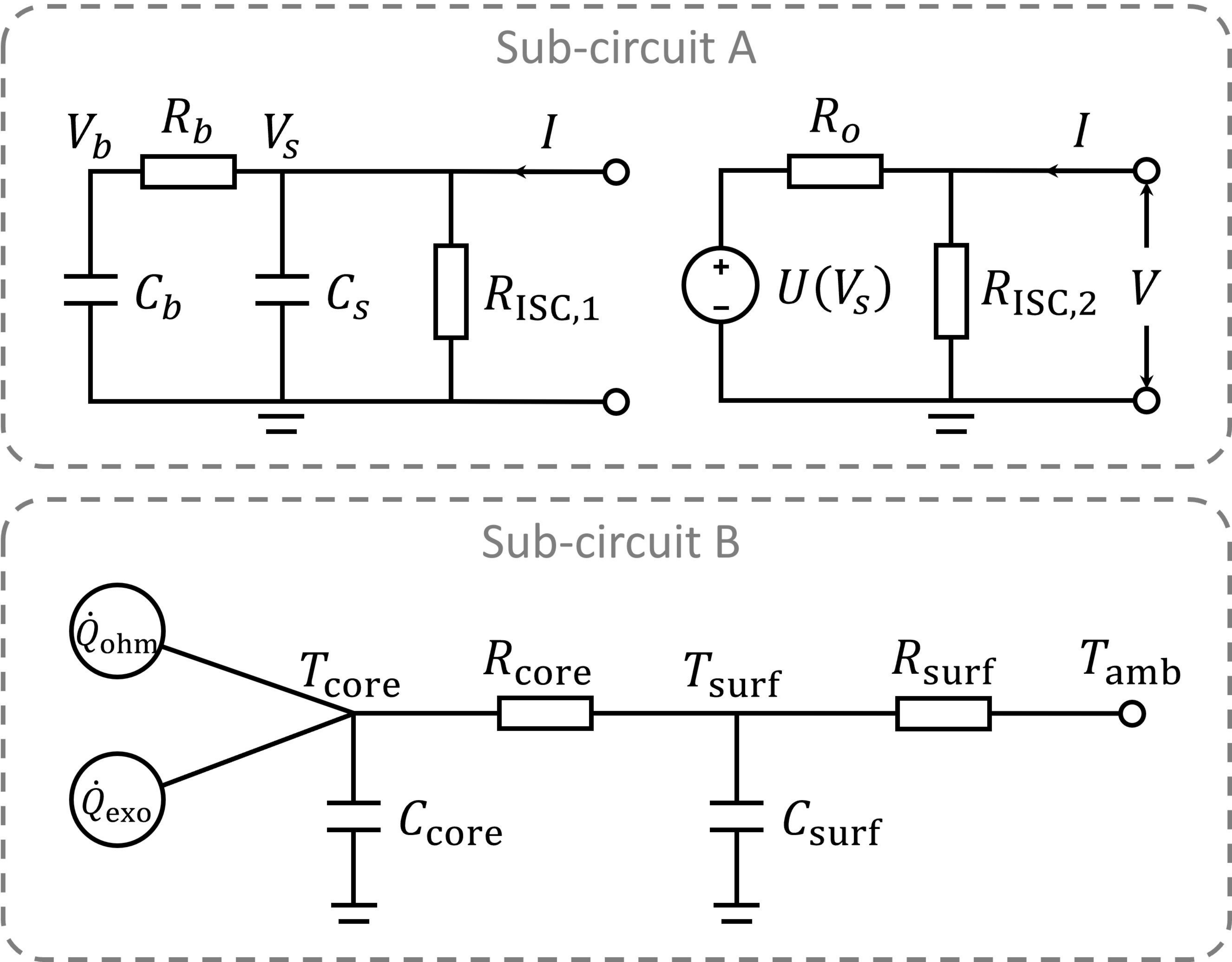}
\end{graphicalabstract}

\onecolumn
\begin{highlights}

\item BattBee models LiB's electro-thermal dynamics under ISC-triggered TR.
\item Model-based framework enables early ISC and TR detection with analytical thresholds.
\item Simulation and experimental results validate the model and the detection framework. 
\end{highlights}
\twocolumn

\maketitle

\section{Introduction} \label{sec:I. Introduction}
Lithium-ion batteries (LiBs) have been propelling forward the global clean energy transition. They power millions of electric vehicles, enable grid-scale energy storage, and facilitate renewable integration into power systems. Despite their significant merits including high energy density and long cycle life, LiBs have yet to be fully safe. The foremost challenge is thermal runaway (TR), a cascading sequence of chemical reactions that causes an uncontrollable temperature rise resulting in fires. Although low in probability, TR-induced fires have occurred to various electric vehicle models and stationary energy storage installations~\cite{Sun:FT:2020,UT-Fire-2025,EPRI-Fire-2025}. These incidents have raised a pressing question: How to ensure the safety of LiBs in practical applications? To address this challenge, we designed a new equivalent circuit model, named BattBee\footnotemark, to capture the electro-thermal behavior in TR triggered by internal short circuits (ISCs). Further, harnessing the model, we develop an observer-based TR detection approach. Combined, the BattBee model and fault detection approach offer an accurate and computationally fast method to deal with TR, thereby mitigating its impact on human safety and asset integrity.

\footnotetext{The model's name ``BattBee'' combines ``Batt'', short for battery, with ``Bee'', inspired by honeybees' reputed ability to detect explosives~\cite{Rodacy:SPIE:2002,Bromenshenk:JCWD:2003}.}

\begin{figure} [t]
 \centering
 \includegraphics[width=0.46\textwidth]{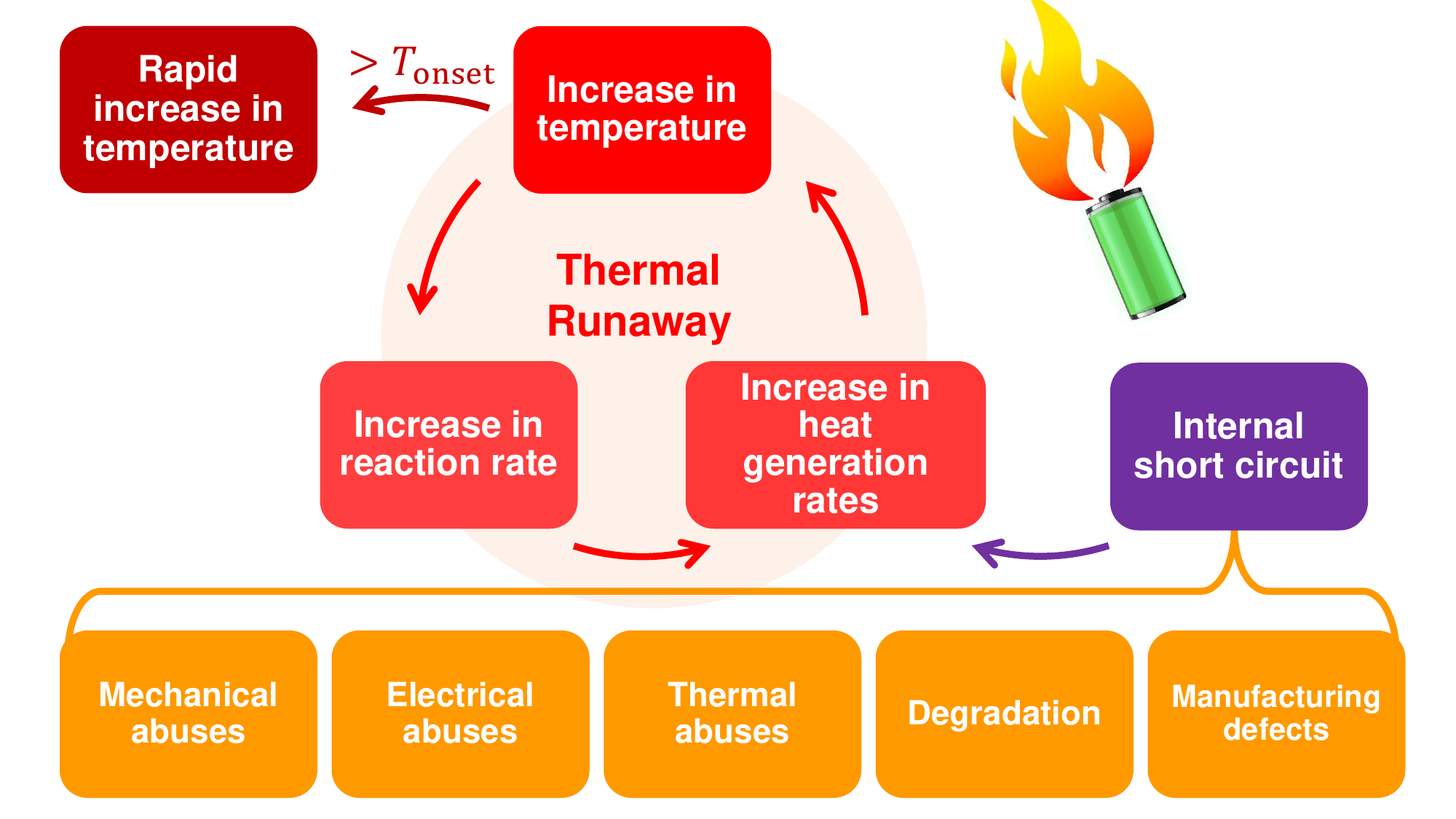}\\
\caption{Schematic for the formation of ISC-triggered TR ($T_{\mathrm{onset}}$ is the TR onset temperature).}
\label{fig:00}
\end{figure}

\subsection{Literature Review on TR and ISC Modeling}\label{Sec:Review-TR-ISC}

TR within a LiB cell springs from a complex process in which exothermic decomposition and accelerated heat build-up reinforce each other. The decomposition reactions break down the solid electrolyte interface (SEI) layers, the electrolyte, and cathode, releasing a significant amount of heat. Once the heating melts the separator, the ISC ensues to cause more heat generation. The rising temperature, in turn, increases the rates of the decomposition reactions, creating a vicious cycle. As a result, the cell's temperature escalates at an ever-increasing pace. During this process, the decomposition releases flammable gases such as ethene and methane from the electrolyte, liberates oxygen from the cathode, and builds up pressure within the cell, ultimately leading to fires and even explosions. TR can be caused by diverse abuse conditions, including: mechanical abuse (e.g., cell deformation or penetration due to collision and crush), electrical abuse (e.g., external short circuit, overcharge, and overdischarge), and thermal abuse (e.g., high ambient temperature, and local overheat)~\cite{Feng:ESM:2018,Spotnitz:JPS:2003}. 
Other major reasons include aging, degradation and cell defects in manufacturing. For almost all TR cases, ISCs represent a key trigger. They typically occur when the separator is ruptured or damaged to cause direct contact between the cell's cathode and anode---such separator damage can result from mechanical penetration or fracture, melting under high temperature, and piercing by lithium dendrites, due to abuse or degradation. On some occasions, ISCs trace to electrode misalignment in cell manufacturing or other reasons. In general, ISCs either directly initiate TR or drive an ongoing TR event toward an uncontrollable state if happening during this process~\cite{Feng:ESM:2018,Zhang:RSER:2021,Lin:JES:2023}.  Fig.~\ref{fig:00} shows the schematic of an ISC-triggered TR process.

Dynamic modeling is essential for understanding the occurrence and progression of TR, while providing a foundation for TR monitoring and detection. This subject has attracted a growing body of literature over the past decades. To our knowledge, the earliest study appears in~\cite{Hatchard:JES:2001}, which develops a 1D thermal model that accounts for the reversible heat generated by electrode reactions. However, this model only suits for predicting the onset of a TR event due to its limited consideration of exothermic decomposition. An extension of~\cite{Hatchard:JES:2001} to 3D thermal abuse modeling is shown in~\cite{Kim:JPS:2007}, which takes into account the cell shapes, the spatial distributions of materials and temperatures, and the impacts of electrolyte decomposition. Other 3D thermal abuse models appear in~\cite{Guo:JPS:2010,Lopez:JES:2015}, which consider different exothermic reactions across the SEI, cathode, anode, and electrolyte. The work in~\cite{Coman:JES:2017} investigates modeling for ISC-triggered TR, proposing an efficiency factor to capture the conversion of electric energy into thermal energy when an ISC happens. The literature also includes modeling for TR under mechanical abuse~\cite{Zhang:JPS:2015} and intertwined with venting of gas~\cite{Kim:JPS:2021,He:AE:2022}. These above studies focus on TR characterization by combining thermal modeling with changes in chemical reaction rates due to reaction kinetics. 

Further, coupling electrochemical dynamics with thermal failure allows to enable deeper understanding and modeling of TR. Taking a lead in this regard, the work in~\cite{Feng:JES:2018} comprehensively incorporates capacity decline under high temperature, ISC due to separator failure, and decomposition reactions that release significant heat for TR prediction. Various other electrochemical-thermal models have also emerged to describe TR caused by high ambient temperature, overcharge, and others~\cite{Ren:HPS:2017,Chakraborty:JES:2024,Mallarapu:JECS:2024}. If TR happens to a cell in a LiB pack, it will quickly spread to the adjacent cells in a domino effect, ravaging the pack and causing more hazards. To characterize TR propagation, existing studies have either coupled cell-level 0D or 2D models by accounting for heat conduction between cells~\cite{Feng:AE:2015,Bugryniec:ER:2020,Coman:AE:2017}, or developed 3D models that consider key mechanisms---such as mass loss and electrolyte vaporization---underlying TR propagation~\cite{Hoelle:JES:2023}.

Note that thermal modeling for LiBs under regular operating conditions is another topic of continuous interest. The flourishing literature encompasses a wide range of thermal models coupled with electrochemical, electrical, and mechanical behaviors, e.g.,~\cite{Gu-Wang:JES:2000,Tian:TCST:2019,Ai:JES:2020,Alkhedher:JES:2024,Lin:JPS:2014}, to name just several. Recent years have witnessed growing use of machine learning to achieve data-driven thermal modeling~\cite{Tu:CDC:2022,He-Wang:JES:2024,Yang-Nguyen:AE:2025}.

While many of the TR models reviewed above incorporate ISCs, a few other studies have specifically focused on ISC modeling for the purpose of detection or mechanistic understanding. A common approach involves introducing an ISC resistance into an equivalent circuit model, typically placed in parallel with the open-circuit voltage source~\cite{Ouyang:JPS:2015,Sun:JPS:2024,Lai:ElecActa:2018,Bhaskar:AE:2025}. This resistance takes a low value to capture the rapid charge depletion and voltage drop that occur during an ISC event. One can further combine such models with lumped thermal models to describe ISC-induced electro-thermal behaviors~\cite{Chen:ATE:2017,Yang:Auto:Innov:2022,Pan:JES:2025}. Beyond equivalent circuit modeling, some studies have rallied around pseudo 2D and 3D electrochemical-thermal modeling to characterize ISCs and their effects~\cite{Zavalis:JES:2012,Zhao:JES:2015,Cai:JES:2019,Kong:JES:2020,Liu:Energies:2022,Fang:JPS:2014}. These models provide insights into fundamental mechanisms governing the thermal and electrochemical responses during ISCs. 

\subsection{Literature Review on TR and ISC Detection}

Upon its initial formation, TR can quickly spiral into uncontrollable blazes to cause catastrophic failures. Early detection of this stealthy hazard is hence critical for real-world safety of LiB-powered applications and has become an urgent challenge. A straightforward way is to develop and deploy different types of sensors to monitor precursors indicating the emergence of TR~\cite{Koch:Batteries:2018,Kong:JES:2023,Zhang:Sensors:2023,Mei:NatComm:2023,Zhang:AE:2025}. Such precursors include abnormal changes in temperature, voltage, stress, and expansion force, as well as smoke and gas leakage. Useful as its is, this passive detection approach increases the cost, weight, and design complexity of LiB systems, with its accuracy contingent on sensor precision and susceptible to ambient operating conditions. By contrast, active detection seeks to integrate multiples sources of information, like different types of data and models, to identify TR in the nascent stage, promising earlier and more accurate detection. Existing methods generally fall into two categories: data-driven and model-based.

Data-driven detection methods attempt to extract features directly from measured data to find out anomalies leading to TR events. The work in~\cite{Li-Deng-Zhang-Liu-Wang:AE:2023} employs multi-dimensional statistical analysis to identify and select safety-representative features for LiBs, utilizing metrics such as mean, median, variance, kurtosis, skewness, and entropy, among others. Meanwhile, TR detection is approached by comparing shape similarity between thermal measurements in~\cite{Li:ISIE:2021}, independent component analysis in~\cite{Jung:Energies:2024}, and outlier analysis~\cite{Sun-Wang:TPE:2022,Yuan:JES:2023}. With the increasing data abundance for LiB systems, machine learning has stood out as a natural choice for data-driven TR detection, thanks to its capability of handling complex data and uncovering latent features or patterns~\cite{Li-Weng:JES:2023,Zhao:PECS:2024}. In this research direction, supervised learning has gained popularity, which uses labeled datasets to predict faults. The studies in~\cite{Ojo:TIE:2021,Li-Peng:TPE:2022,Zhao:RESS:2024} use different types of neural networks to predict the surface temperature of LiBs and then compare their predictions against measurements to pinpoint thermal anomalies. Neural networks can also be trained directly to learn TR-related patterns directly from the measurements of temperature, current, and voltage~\cite{Du:JES:2023,Lekoane:JES:2024,Liu:JES:2024}. In~\cite{Jia-Li:JPS:2022}, neural networks are trained to classify thermal safety/risk levels. Integrating physics with neural networks can further enhance the fault detection accuracy~\cite{Kim:JES:2023,Ke:AE:2025}, with the idea tracing to~\cite{Tu:AE:2023}. Riding on the advances in machine learning, other techniques---such as transfer learning, Gaussian processes, and support vector machines---have found their way into thermal fault prediction or detection~\cite{Chen-Xiong:AE:2018,Schaeffer:CRPS:2024,Masalkovaite:NatComm:2024}. Beyond supervised learning, unsupervised learning can also play a valuable role in TR detection. Instead of relying on labeled data for training, this approach extracts computable features from raw data and applies techniques like clustering, distance calculations, or hypothesis testing to these features to identify the onset or occurrence of TR or ISCs~\cite{Schmid:IEEE-SJ:2021,Qiao:AE:2022,Yu-Yang:JES:2025}. 

Model-based methods provide another powerful means for TR and ISC detection. Dynamic models encapsulate physics-based understanding of LiBs' behaviors within parsimonious mathematical structures, making them both expressive and generalizable. Supplementing these models with real-time operating data makes it possible to develop detection methods with high data efficiency, fast computation, and broad applicability. The work in~\cite{Dey:TCST:2019} employs a spatially distributed electro-thermal model with a thermal fault expression and builds observers to estimate the parameter indicative of the thermal fault. Some other studies have applied observers or Kalman filters to spatio-temporal thermal models to estimate and monitor the temperature distribution for LiBs~\cite{Kim:TCST:2014,Tian:TCST:2019,Zhang-Dey:AUTO:2021,Tu:CDC:2022}, which are extensible to TR detection. Lumped models are often favored for real-time TR detection, as their low-order equivalent circuit structures facilitate both implementation and computation. Such models usually couple the well-known Thevenin model with a thermal model. Based on them, one can evaluate the likelihood of TR by assessing the discrepancy between model predictions and measurements relative to a predefined threshold~\cite{Klink:JPS:2021}. As shown in~\cite{Dey:CEP:2016,Marcicki:DSCC:2010,Dong:JES:2021,Jia:JPS:2024}, fault observers and Kalman filters allow to choose and compute thresholds based on a LiB's immediate operating condition. While these studies focus on cell-level fault detection, pack-level detection has attracted growing interest. The method proposed in~\cite{Ouyang:JPS:2015} estimates ISC-related parameters for each cell and then assesses the consistency of these parameters within the pack to enable ISC detection. In~\cite{Bhaskar:AE:2025}, observers are designed to estimate the leakage current of individual cells, allowing cell-to-cell comparison to reveal potential ISC incidents. The effectiveness of fault detection often depends on the availability and placement of sensors, as shown by structural analysis in~\cite{Cheng:TTE:2022}. To meet the practical need of using a minimum number of sensors, the study in~\cite{Farakhor:ICPS:2024,Farakhor:ACC:2025} proposes to include structural properties inherent to LiBs packs as extra information into a moving horizon estimation strategy for fault detection. Such properties include the behavioral uniformity among the cells under normal operation and the natural sparsity of fault occurrence. 

\subsection{Statement of Contributions}\label{Sec:Contribution-Statement}

In this paper, we present a systematic study that encompasses: 1) dynamic modeling for ISC-induced TR, 2) early detection of TR events based on the proposed model. The specific contributions are as follows.

First, we develop the first equivalent circuit model to describe a LiB's electro-thermal behavior during an ISC-triggered TR event. As reviewed in Section~\ref{Sec:Review-TR-ISC}, such a model has been absent from the literature, which to date includes only equivalent circuit models for ISCs or 1D/3D TR models grounded in reaction kinetics. 
While having an equivalent circuit structure, the proposed model---named BattBee---draws inspiration from thermo-electrochemical modeling and is designed to capture multiple key phenomena, including the impacts of a growing ISC on voltage drop, exothermic heat generation, and the resulting rapid temperature rise. With its structural simplicity, the BattBee model is well-suited for efficient TR simulation and detection in real-time usage scenarios. 

Second, building on the BattBee model, we develop an ISC and TR detection approach leveraging the notion of observer-based fault detection. What makes our approach appealing is its simplicity and rigor in design, along with closed-form detection logics for the sake of practical implementation. To this end, we perform piecewise linearization of the BattBee model, converting it into a set of linear submodels corresponding to different operating regimes. Given these linear submodels, we design a linear observer-based fault detection framework and derive analytical bounds for the observer's residuals---the actual residuals exceeding these bounds imply an ISC or TR event to occur. By design, the proposed approach is both easy to implement and fast to compute. 

We validate both the BattBee model and the proposed fault detection approach using simulations and experimental data. The results demonstrate the effectiveness of the research results, highlighting their potential for a range of applications.

\subsection{Organization} \label{Organization}
The remainder of the paper is organized as follows. Section~\ref{Sec:BattBee-Model-Development} develops the BattBee model, showing its governing equations and introducing the rationale underpinning it. Section~\ref{sec:BattBee-Model-Validation} validates the BattBee model through simulations and experiments under both normal conditions and ISC-induced TR scenarios. Section~\ref{Sec:TR-ISC-Fault-Detection} presents the design of the ISC and TR detection approach, with the validation results reported in Section~\ref{Sec:TR-ISC-Fault-Detection-Validation}. Finally, Section~\ref{Sec:Conclusion} concludes the paper.

\section{Development of the BattBee Model} \label{Sec:BattBee-Model-Development}

In this section, we develop the BattBee model and elaborate the rationale behind its design. Briefly, BattBee employs   physics-inspired design and integrates multiple equivalent circuits to capture the coupled electrical and thermal behaviors under ISC conditions. 

\begin{figure} [t]
 \begin{center}
 \includegraphics[width=0.46\textwidth]{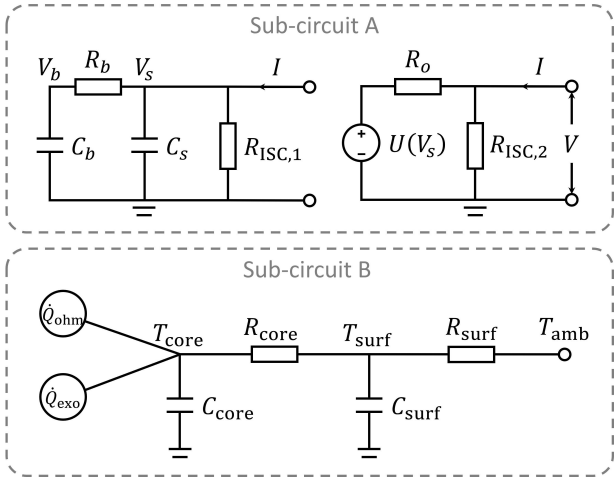}\\
 \caption{The BattBee model's equivalent circuit structure.}\label{Fig:BattBee}
 \end{center}
\end{figure}

\subsection{The BattBee Model} \label{BattBee-Model-Governing-Equations}

As shown in Fig.~\ref{Fig:BattBee}, the BattBee model includes two sub-circuits, labeled as A and B, to simulate the electric and thermal behaviors, respectively, when an ISC happens. The following is a detailed explanation for each. 

 Sub-circuit A extends the nonlinear double-capacitor model from~\cite{Tian:TCST:2021} by incorporating two ISC resistors, $R_{\mathrm{ISC},1}$ and $R_{\mathrm{ISC},2}$. Nominally, the left part of the circuit approximates the charge transfer dynamics resulting from lithium-ion diffusion in an electrode, and the right part represents the terminal voltage. Under normal operating conditions, $R_{\mathrm{ISC},1}$ and $R_{\mathrm{ISC},2}$ are effectively infinite. However, in the event of an ISC, their values decrease significantly---approaching zero in the case of a complete ISC. As an interesting third case, though beyond the scope of this paper, the self-discharge effect is tantamount to setting $R_{\mathrm{ISC},1}$ to a large but finite value, while keeping $R_{\mathrm{ISC},2}$ infinite. 
 
The governing equations of sub-circuit A are given by
\begin{subequations}\label{BattBee-Electric-Governing-Equations}
\begin{align}
\label{eqn:Vb_dynamics}
{\dot{V}}_b (t) &= \frac{V_s (t) - V_b (t)}{R_bC_b} , \\ 
\label{eqn:Vs_dynamics}
{\dot{V}}_s (t) &=\frac{V_b (t) - V_s (t)}{R_bC_s} - \frac{ V_s(t)}{R_{\mathrm{ISC,1}}C_s} +\frac{I(t)}{C_s},\\
\label{eqn:V_out_dynamics}
V (t) &= \frac{1}{1+\frac{R_o}{R_\mathrm{ISC,2}}} \left[U(V_s (t))+I(t)R_o\right] .
\end{align}
\end{subequations}
In above, the capacitors $C_b$ and $C_s$ store charge; $V_b$ and $V_s$ are the voltages across them; $I$ is the applied current, with $I>0$ for charging and $I<0$ for discharging; $R_b$ is the charge transfer resistance, and $R_o$ is the internal resistance; $U(\cdot)$ is the open-circuit voltage (OCV) source. We normalize $V_b$ and $V_s$ such that $0\leq V_b, V_s \leq 1$ for convenience. As such, the LiB's capacity is $C_b+C_s$, and the state-of-charge (SoC) is
\begin{align}\label{eqn:SoC_dynamics}
\mathrm{SoC} (t) =\frac{C_bV_b\left(t\right)+C_sV_s\left(t\right)}{C_b+C_s} \times 100\%.
\end{align}
The LiB is fully charged with $\mathrm{SoC} = 100\%$ when $V_b=V_s=1$, and fully depleted with $\mathrm{SoC} = 0 \%$ when $V_b=V_s=0$. Note that, at equilibrium state under normal operation, $V_b=V_s$, and thus $\mathrm{SoC}=V_s$. Without loss of generality, we can parameterize $U(\cdot)$ as a polynomial of $V_s$:
\begin{equation}\label{eqn:OCV_dynamics}
U(V_s) = \sum_{i=0}^{N} \lambda_i V_s^i,
\end{equation}
where $\lambda_i$ for $i=1,\ldots,N$ are the coefficients, and $N$ is the polynomial's order. 

When an ISC occurs, the two electrodes of a LiB cell, normally kept apart by the separator, come into direct contact. This would result in high internal currents, fast charge draining, and sharp voltage drop. These phenomena can be grasped by the introduction of $R_{\mathrm{ISC},1}$ and $R_{\mathrm{ISC},2}$, as described in Eq.~\eqref{BattBee-Electric-Governing-Equations}. Specifically, when $R_{\mathrm{ISC},1}$ decreases from infinity to a small value, the charge stored in $C_b$ and $C_s$ is quickly discharged through $R_{\mathrm{ISC},1}$, producing a large internal current. As a consequence of this rapid charge loss, $V_b$, $V_s$ and $\mathrm{SoC}$ all decline, causing a drop in the terminal voltage $V$. Further, when $R_{\mathrm{ISC},2}$ also becomes small, $V$ would fall precipitously. Note that the severity of the ISC event determines the values of $R_{\mathrm{ISC},1}$ and $R_{\mathrm{ISC},2}$. Compared to this proposed design, existing equivalent circuit models for ISCs in the literature, e.g.,~\cite{Ouyang:JPS:2015,Sun:JPS:2024,Lai:ElecActa:2018,Bhaskar:AE:2025}, are limited to capturing only the voltage drop.

Sub-circuit B approximates the LiB's thermal dynamics when an ISC triggers a TR event. {\color{blue}It extends a lumped thermal model proposed in~\cite{Lin:JPS:2014}, which  simplifies the spatial temperature distribution of a cylindrical cell into a heat flow between the cell’s core and surface. However, different from~\cite{Lin:JPS:2014}, sub-circuit B  accounts for various types of heat generation that dominate the TR process, with particular emphasis on heating due to exothermic reactions.} 
Sub-circuit B is governed by
\begin{subequations}
\begin{align}
\label{eqn:T_core_dynamics}
{\dot{T}}_\mathrm{core} (t) &=
\frac{T_\mathrm{surf} (t) - T_\mathrm{core}(t)}{R_\mathrm{core}C_\mathrm{core}}+ \frac{\dot Q(t)}{C_\mathrm{core}}, \\
\label{eqn:T_surf_dynamics}
{\dot{T}}_\mathrm{surf} (t) &=\frac{T_\mathrm{core} (t) -T_\mathrm{surf} (t) }{R_\mathrm{core}C_\mathrm{surf}}-\frac{T_\mathrm{surf} (t) -T_\mathrm{amb}}{R_\mathrm{surf} C_\mathrm{surf}},
\end{align}
\end{subequations}
where $T_\mathrm{core/surf}$, $C_\mathrm{core/surf}$ and $R_\mathrm{core/surf}$ are the temperatures, thermal capacitances, conductive/convective resistances at the LiB's core and surface, respectively, and $T_\mathrm{amb}$ is the ambient temperature.

Further, the heat source term $\dot Q$ sums up the ohmic heat rate $\dot{Q}_\mathrm{ohm} (t)$ and the exothermic heat rate $\dot{Q}_\mathrm{exo} (t)$:
\begin{align}
\dot Q(t) =  \dot{Q}_\mathrm{ohm} (t) +\dot{Q}_\mathrm{exo} (t),
\end{align}
where
\begin{align}\label{eqn:Q_ohm_dynamics}
{\dot{Q}}_\mathrm{ohm}\left(t\right)=I^2(t) R_o.
\end{align}
Further, $\dot{Q}_\mathrm{exo} (t)$ accounts for the large amounts of heat released in exothermic reactions that take place during the TR process. It includes $\dot Q_{\mathrm{ec}} (t)$, the heat rate due to electrochemical reactions, and $\dot Q_{\mathrm{decomp}} (t)$, the heat rate due to the decomposition of active materials: 
\begin{align}
\dot{Q}_\mathrm{exo} (t) = \dot Q_{\mathrm{ec}} (t) + \dot Q_{\mathrm{decomp}}(t). 
\end{align}
According to~\cite{Coman:JES:2017}, we have
\begin{align}
\label{eqn:Q_ISC_dynamics}
\dot Q_{\mathrm{ec}} (t)  = - h_\mathrm{ec} \cdot \dot{\mathrm{SoC}}(t)=-h_\mathrm{ec} \frac{V_s\left(t\right)+I\left(t\right)R_\mathrm{ISC,1}}{R_\mathrm{ISC,1}\left(C_b+C_s\right)},
\end{align}
where $h_\mathrm{ec}$ is an enthalpy-related coefficient for heating caused by the ISC-induced SoC loss. The active material decomposition occurs to the SEI, electrodes, and electrolyte when the LiB's temperature becomes high. This is a dynamic process resulting in complex heat generation. To capture this behavior while maintaining the model's simplicity, we introduce a lumped description for $ \dot{Q}_\mathrm{decomp}$: 
\begin{align}\label{eqn:Q_decomp_dynamics} 
\dot{Q}_{\mathrm{decomp}}(t) = \frac{\alpha_{1} \exp\left[\alpha_{2} \left(T_\mathrm{core} (t) - T_\mathrm{onset}\right)\right]}{1 + \alpha_3 \exp\left[\alpha_{4} \left(T_\mathrm{core} (t) - T_\mathrm{onset}\right)\right]},
\end{align}
where $T_\mathrm{onset}$ is the TR onset temperature, and $\alpha_i$ for $i=1,\ldots,4$ are coefficients. These coefficients are chosen such that $\dot{Q}_{\mathrm{decomp}}$ is small or negligible when $T_\mathrm{core} < T_\mathrm{onset}$, but large when $T_\mathrm{core} > T_\mathrm{onset}$. Also, $\dot{Q}_{\mathrm{decomp}}$ would take a negligible value after $T_\mathrm{core}$ or $T_\mathrm{surf}$ reaches a pre-specified peak temperature, due to the active material depletion. The expression for $\dot{Q}_{\mathrm{decomp}}$ in Eq.~\eqref{eqn:Q_decomp_dynamics} represents the best formulation we identified after exploring various alternatives. Nonetheless, other valid forms may also exist. 

Finally, $R_\mathrm{surf}$ varies with $T_\mathrm{surf} - T_\mathrm{amb}$, due to the dependence of convection and radiation on temperature gradients~\cite{bergman2011fundamentals}. To reflect this, we can parameterize $R_\mathrm{surf}$ into a linear form as
\begin{equation}\label{eqn:R_surf_dynamics}
{R_\mathrm{surf}\left( T_\mathrm{surf}\left(t\right)\right)} = R_\mathrm{surf,0} \cdot \left( 1 - \beta \left( T_\mathrm{surf}\left(t\right) - T_\mathrm{amb} \right) \right),
\end{equation}
where $R_\mathrm{surf,0}$ is what $R_\mathrm{surf}$ is under zero temperature gradient, and with $\beta$ is a coefficient typically taking $1/600$.

Putting together Eqs.~\eqref{BattBee-Electric-Governing-Equations}-\eqref{eqn:Q_decomp_dynamics}, we obtain the mathematical setup of the BattBee model. As highlighted in Section~\ref{Sec:Contribution-Statement}, this model is the first equivalent circuit model to describe the electro-thermal behavior of a LiB under an ISC-triggered TR event. By design, it presents a concise, low-order structure while   capturing various key phenomena underlying TR. This combination of simplicity and fidelity makes the BattBee model competent for TR detection in real world. 

\subsection{Rationale for the BattBee Model}

Let us elaborate on the construction of the BattBee model. Note that sub-circuit B builds upon the formulations presented in~\cite{Lin:JPS:2014,Coman:JES:2017} and has been explained in Section~\ref{BattBee-Model-Governing-Equations}. Therefore, we focus solely on justifying the design of sub-circuit A. We begin with proposing an ISC-including single particle model (SPM and SPMw/ISC), drawing upon the pseudo-2D ISC model in~\cite{Kong:JES:2020}. We then reduce the SPMw/ISC model to an equivalent circuit representation, which is further simplified to yield sub-circuit A of the BattBee model.

\begin{figure} [t]
 \begin{center}
 \includegraphics[width=0.46\textwidth]{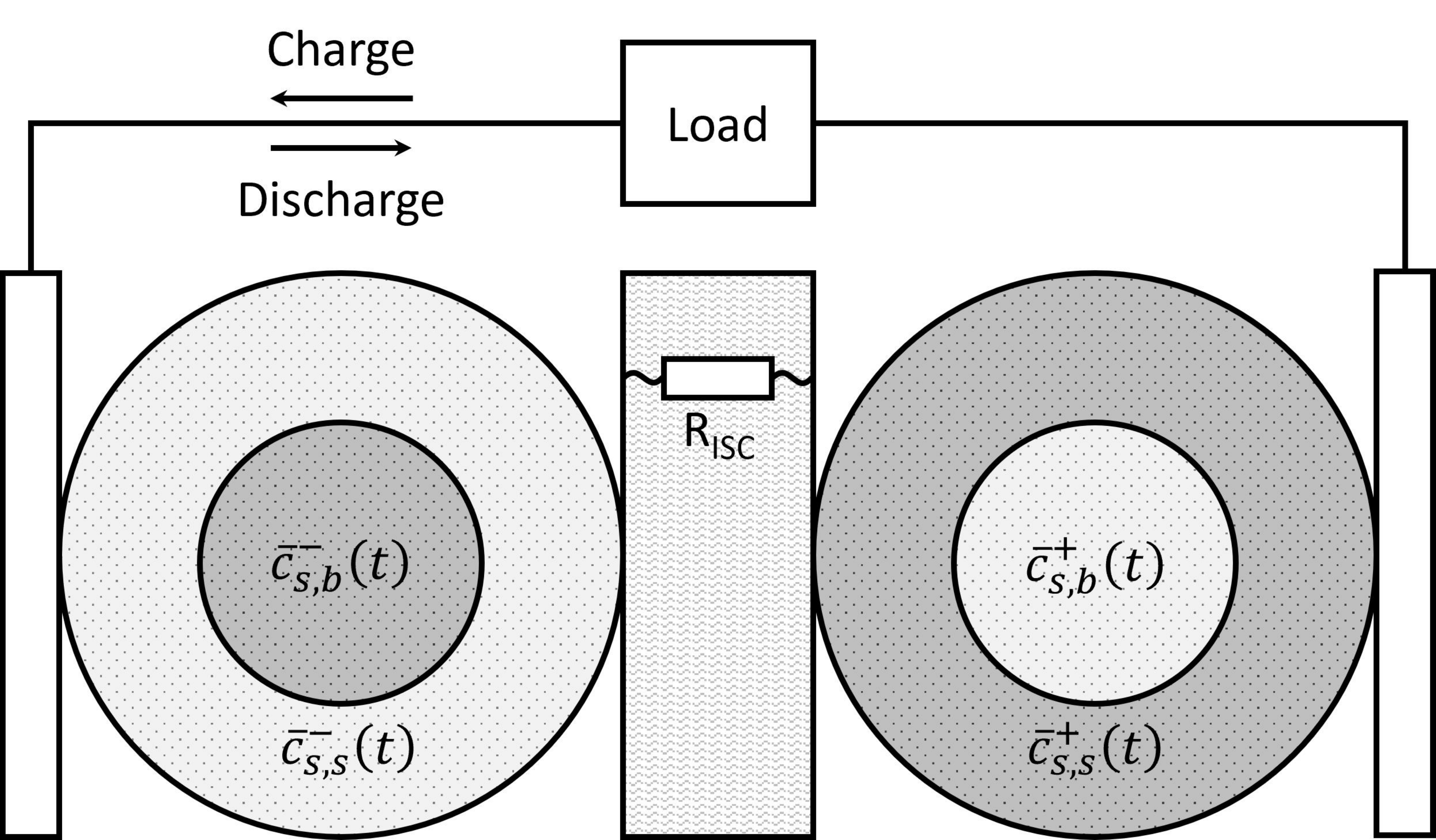}\\
 \caption{Single particle model with an ISC resistance.}
 \label{Fig:SPMwISC_Diagram}
 \end{center}
\end{figure}

Well-known in the literature, the SPM model idealizes the electrodes of a LiB cells as spherical particles while neglecting the electrolyte dynamics~\cite{Moura:TCST:2017}. Lithium ions move within the spherical particles following Fick's second law of diffusion: 
\begin{align}\label{eqn:c_Li_SPM_dynamics}
\frac{\partial c_s^\pm (r,t)}{\partial t}=\frac{1}{r^2}\frac{\partial}{\partial r}\left[D_s^\pm r^2\frac{\partial c_s^\pm (r,t)}{\partial r} \right],
\end{align}
where $c_s^\pm$ is the lithium-ion concentration within the solid-phase particle, $r$ is the radial coordinate, $D_s^\pm$ is the diffusion coefficient, and the superscripts $+$ and $-$ denote the positive and negative electrodes, respectively. The boundary conditions for Eq.~\eqref{eqn:c_Li_SPM_dynamics} are 
\begin{equation}\label{eqn:c_Li_SPM_boundary}
\left. \frac{\partial c_s^\pm}{\partial r}\right|_{r=0}=0,\ \ \ 
\left.\frac{\partial c_s^\pm}{\partial r}\right|_{r= r_s^\pm} = -\frac{1}{D_s^\pm}j^\pm\left(t\right),
\end{equation}
where $r_s^\pm$ is the particle's radius for each electrode, and $j^\pm$ is the molar ion flux. 
Given an ISC, electrons can pass through the separator during charging/discharging. To make an analogy to this, an ISC resistance, $\mathsf{R}_{\mathsf{ISC}}$, is introduced to the separator to conduct the electrons, causing an ISC-induced current, $\mathsf{I}_{\mathsf{ISC}}$, as shown in Fig.~\ref{Fig:SPMwISC_Diagram}. By Ohm's law, 
\begin{align}\label{eqn:I_ISC_SPM}
\mathsf{I}_\mathsf{ISC}(t)= \frac{ \Delta \mathsf{U}_{\mathsf{sep}}}{\mathsf{R}_\mathsf{ISC}},
\end{align}
where $\Delta \mathsf{U}_{\mathsf{sep}}$ is the voltage across the separator. Note that $\mathsf{R}_\mathsf{ISC} = \infty$ in the ISC-free case. Due to the presence of $\mathsf{I}_\mathsf{ISC}$, we have
\begin{align}\label{eqn:molar_flux}
j^{\pm}(t)= \pm \frac{I(t) - \mathsf{I}_\mathsf{ISC}(t)}{Fa^\pm L^\pm S},
\end{align}
where $I$ is the applied current, $F$ is Faraday’s constant, $a^\pm$ is the specific interfacial surface area, $L^\pm$ is the electrode thickness, and $S$ is the cell’s cross-sectional area.  
The Butler-Volmer equation mandates that
\begin{align}\label{Butler-Volmer}
j^\pm(t) = \frac{i_0^\pm}{F}\left[ \exp \left( \frac{\alpha F}{\mathcal{R}T} \eta^\pm (t) \right) - \exp\left( - \frac{\alpha F}{\mathcal{R}T} \eta^\pm (t) \right) \right], 
\end{align}
where $i_0^\pm$ is the exchange current density, $\alpha$ is the charge transport coefficient, $\mathcal{R}$ is the universal gas constant, and $\eta^\pm$ is the overpotential of reaction. Specifically, $\eta^\pm$ is given by
\begin{align} \label{eta-combined-sum}
\eta^\pm(t) = \phi_s^\pm(t) - \phi_e^\pm(t) - U^\pm(c^\pm_{ss} (t)) - FR_f^\pm j^\pm(t),
\end{align}
where $\phi_s^\pm$ and $\phi_e^\pm$ are the potential in the solid-phase electrodes and electrolyte, respectively, $U^\pm (\cdot)$ is the open-circuit potential, $R_f^\pm$ is the film resistance, and $c_{ss}^\pm\left(t\right) = c_s^\pm(r_s^\pm,t)$. By Eq.~\eqref{Butler-Volmer}, we have
\begin{align} \label{eta-sinh-inv}
\eta^\pm(t) = \frac{\mathcal{R} T}{\alpha F} \sinh^{-1}\left(\pm \frac{I(t) - \mathsf{I}_\mathsf{ISC}(t)}{2i_0^\pm a^\pm L^\pm S }\right).
\end{align}
Here, $\sinh^{-1}(\cdot)$ is the inverse hyperbolic sine function. From the above, the terminal voltage $V$ then is
\begin{align}\label{eqn:V_out_SPM_dynamics} \nonumber
V(t) &= \phi_s^+(t) - \phi_s^-(t) \\ \nonumber
&= U^+(c^+_{ss} (t))-U^-(c^-_{ss} (t)) - \Delta \mathsf{U}_{\mathsf{sep}} + \phi_e^+(t) - \phi_e^-(t)\\
& \quad+\eta^+(t) - \eta^-(t) + F R_f^+ j^+(t) - F R_f^- j^- (t).  
\end{align}
Collecting Eqs.~\eqref{eqn:c_Li_SPM_dynamics}-\eqref{eqn:V_out_SPM_dynamics}, we develop the SPMw/ISC model, which is the SPM model incorporating an ISC mechanism. Next, we simplify this model towards the BattBee model

The simplification follows similar lines in our prior studies in~\cite{Fang:TCST:2017, Tian:TCST:2021, Biju:AE:2023}. We begin by subdividing each spherical particle into two continuous finite volume elements. The first element is a ball with a radius $r_b^\pm$, which is an electrode's bulky inner part; the second element is a hollow sphere with an outer radius of $r_s^\pm$ and an inner radius of $r_b^\pm$, which is the electrode's shell. Based on Eq.~\eqref{eqn:c_Li_SPM_dynamics}, the amounts of lithium ions within each element, $n_{s,b}^\pm$ and $n_{s,s}^\pm$, evolve by 
\begin{subequations}\label{ns-nb-dynamics}
\begin{align}\label{eqn:n_Li_b_dynamics}
 {\dot{n}}_{s,b}^{\pm}(t)
&=\int_{0}^{r_{b}^{\pm}}{\frac{\partial c_s^{\pm}(r,t)}{\partial t} 4\pi r^2dr} = D_s^{\pm} S_b^\pm \left. \frac{\partial c_s^{\pm}(r,t)}{\partial r}\right|_{r_{b}^\pm},\\\label{eqn:n_Li_s_dynamics} \nonumber
{\dot{n}}_{s,s}^{\pm}(t)
&=\int_{r_{b}^{\pm}}^{r_s^{\pm}}{\frac{\partial c_s^{\pm} (r,t)}{\partial t}  4\pi r^2dr}\\
&= D_s^{\pm} S_s^\pm \left. \frac{\partial c_s^{\pm} (r,t)}{\partial r}\right|_{r_s^{\pm}} - D_s^{\pm}S_b^\pm \left. \frac{\partial c_s^{\pm}(r,t)}{\partial r}\right|_{r_{b}^{\pm}},
\end{align}
\end{subequations}
where $S_b^\pm = 4\pi \left( r_{b}^{\pm} \right)^2 $ and $S_s^\pm = 4\pi \left( r_{s}^{\pm} \right)^2 $.
To proceed, we consider the average lithium-ion concentration within the bulky inner part   and that within the shell, which are defined as
\begin{align*}
\bar c_{s,b}^\pm(t) &= \frac{n_{s,b}^\pm (t)}{\Delta v_b^\pm}, \ \ \ \
\bar c_{s,s}^\pm(t) = \frac{n_{s,s}^\pm(t)}{\Delta v_s^\pm}, \\
\Delta v_b^\pm &= {4\pi \left(r_b^\pm\right)^3 \over 3}, \ \ \ \Delta v_s^\pm = {4\pi \left[ \left(r_s^\pm\right)^3 - \left(r_b^\pm\right)^3) \right] \over 3}.
\end{align*}
Then, it follows from Eq.~\eqref{ns-nb-dynamics} that 
\begin{subequations}\label{eqn:c_Li_b_dynamics}
\begin{align}
{\dot{\bar{c}}}_{s,b}^{\pm}(t)&=\frac{ D_s^{\pm} S_b^\pm }{\Delta v_b^\pm} \left. \frac{\partial c_s^{\pm}(r,t)}{\partial r}\right|_{r_{b}^\pm},\\ \label{eqn:c_Li_s_dynamics}
{\dot{\bar{c}}}_{s,s}^{\pm}(t)
&=\frac{D_s^\pm S_s^\pm }{\Delta v_s^\pm } \left. \frac{\partial c_s^{\pm} (r,t)}{\partial r}\right|_{r_s^{\pm}} - \frac{D_s^\pm S_b^\pm }{\Delta v_s^\pm } \left. \frac{\partial c_s^{\pm}(r,t)}{\partial r}\right|_{r_{b}^{\pm}}. 
\end{align}
\end{subequations}
Moving forward, we approximate the concentration gradient by 
\begin{align*}
\left. \frac{\partial c_s^{\pm}(r,t)}{\partial r}\right|_{r_{b}^\pm} \approx \frac{\bar{c}_{s,s}^{\pm}\left(t\right)-\bar{c}_{s,b}^{\pm}\left(t\right)}{\frac{r_s^{\pm}}{2}}. 
\end{align*}
This approximation, along with Eq.~\eqref{eqn:c_Li_SPM_boundary} and Eq.~\eqref{eqn:molar_flux}, leads to 
\begin{subequations}\label{c-bar-dynamics-final-form}
\begin{align}
 \dot{\bar{c}}_{s,b}^{\pm}(t) =& \frac{ 2 D_s^{\pm} S_b^\pm }{\Delta v_b^\pm r_s^\pm} 
\left(\bar{c}_{s,s}^{\pm}\left(t\right)-\bar{c}_{s,b}^{\pm}\left(t\right)\right),\\ \nonumber
 \dot{\bar{c}}_{s,s}^{\pm}(t) =& - \frac{ 2 D_s^{\pm} S_b^\pm }{\Delta v_s^\pm r_s^\pm} 
\left(\bar{c}_{s,s}^{\pm}\left(t\right)-\bar{c}_{s,b}^{\pm}\left(t\right)\right) \\ 
& \mp \frac{ S_s^\pm }{\Delta v_s^\pm F a^\pm L^\pm S}
\left[ I(t) - \mathsf{I}_\mathsf{ISC}(t) \right].
\end{align}
\end{subequations}
A close similarity between Eq.~\eqref{c-bar-dynamics-final-form} and Eqs.~\eqref{eqn:Vb_dynamics}-\eqref{eqn:Vs_dynamics} emerges upon comparison, with the following observations: 
\begin{itemize}
\item $V_b$ and $V_s$ correspond to $ {\bar{c}}_{s,b}^{\pm}$ and ${\bar{c}}_{s,s}^{\pm}$;

\item $C_b$ and $C_s$ mirror $\Delta v_b^\pm$ and $\Delta v_s^\pm$, respectively, which is reasonable since $C_b$ and $C_s$ represents the capacitance for charge storage within the BattBee model, while $\Delta v_b^\pm$ and $\Delta v_s^\pm$ denotes the volumetric space for lithium-ion storage in the SPMw/ISC model; 

\item $R_b$ is related to $r_s^\pm /D_s^\pm S_b^\pm$, reflecting that the resistance for lithium-ion diffusion is proportional to $r_s^\pm$ and inversely proportional to $D_s^\pm$ and $S_b^\pm$;

\item $R_{\mathrm{ISC},1}$ is interpretable as resulting from $\mathsf{R}_\mathsf{ISC}$, given Eq.~\eqref{eqn:I_ISC_SPM}. 
\end{itemize}

Now, let us consider Eq.~\eqref{eqn:V_out_SPM_dynamics}. Since $\sinh^{-1}(\cdot)$ lends itself to linearization, we   approximately have    $\eta^\pm(t) \approx \pm R_\eta^{\pm} \left[ I(t) - \mathsf{I}_\mathsf{ISC}(t) \right]$, where $ R_\eta^{\pm}$ results from linearizing $\sinh^{-1}(\cdot)$ and can be viewed as a resistance term. Further, we assume $\phi_e^\pm(t) = R_e^\pm I(t)$, with $R_e^\pm$ denoting the electrolyte resistance. Similarly, the term $F R_f^+ j^+(t) - F R_f^- j^- (t)$ can be approximated as $R_{\mathrm{film}} I(t)$, where $R_{\mathrm{film}}$ represents an aggregated film resistance. With these approximations,~\eqref{eqn:V_out_SPM_dynamics} can be rewritten as
\begin{align}\label{eqn:V_out_SPM_dynamics-rewrite} \nonumber
V(t) &=  U^+(c^+_{ss} (t)) - U^-(c^-_{ss} (t)) - \mathsf{R}_\mathsf{ISC} \mathsf{I}_\mathsf{ISC}(t) \\&\quad + R_{\mathrm{total}} I(t)- R_\eta \mathsf{I}_{\mathsf{ISC}}(t),
\end{align}
where $R_\eta = R_\eta^+ - R_\eta^-$ and $R_{\mathrm{total}} = R_\eta + R_e^+ - R_e^- + R_{\mathrm{film}}$. For Eq.~\eqref{eqn:V_out_dynamics}, we can rewrite it as 
\begin{align}\label{V_out-rewrite} \nonumber
V(t) &= U(V_s (t))+ R_o I(t) -R_{\mathrm{ISC},2} I_{\mathrm{ISC},2} (t) \\ & \quad- \left( R_o - R_{\mathrm{ISC},2} \right) I_{\mathrm{ISC},2}(t),
\end{align}
where $I_{\mathrm{ISC},2}$ is the current flowing through $R_{\mathrm{ISC},2}$:
\begin{align*}
I_{\mathrm{ISC},2} (t) = \frac{U(V_s (t))+ R_o I(t) }{R_o+R_{\mathrm{ISC},2} }.
\end{align*}
Comparing Eq.~\eqref{V_out-rewrite} with Eq.~\eqref{eqn:V_out_SPM_dynamics-rewrite}, we can find out the following:
\begin{itemize}
  \item $U(\cdot)$ corresponds to $U^+(\cdot) - U^-(\cdot)$; 
  
  \item $R_{\mathrm{ISC},2}$ plays a role analogous to that of $\mathsf{R}_\mathsf{ISC}$ in the SPMw/ISC model; 
  
  \item $R_o$ corresponds to $R_{\mathrm{total}}$.
\end{itemize}

The above derivation shows that the BattBee model is a simplified version of the SPMw/ISC model. As such, the BattBee model is justifiable to capture the key dynamics associated with an ISC event, while its low-order circuit structure facilitates real-world model execution. This model can be improved further for better accuracy---for example, we can let $R_o$ vary with SoC and temperature. However, the current formulation is accurate enough for the purpose of TR and ISC detection and thus sufficient here, though exploring potential enhancements will be part of our future work.

\section{Validation of the BattBee Model} \label{sec:BattBee-Model-Validation}
In this section, we validate the BattBee model to demonstrate its effectiveness. The first validation is based on simulations using GT-SUITE/GT-AutoLion, a multi-physics simulation tool for LiBs; the second builds on experimental data, partly generated by Sandia National Laboratories~\cite{Lin:JES:2023} and partly collected at the University of Kansas.

\subsection{Simulation-Based Validation} \label{sec:BattBee-Model-Validation-Sim}
 
For the simulation, we consider an NCM811 cell paired with a graphite anode and employing an ethylene carbonate-ethyl methyl carbonate electrolyte. The cell is enclosed in an aluminum casing, with a nominal capacity of $25$ Ah and physical dimensions of $309\times 117 \times 5.9 $ mm. The cell is configured and simulated using GT-AutoLion.

We begin by identifying the BattBee model under normal operating conditions, where $R_{\mathrm{ISC},1}, R_{\mathrm{ISC},2} = \infty$ and $\dot Q_{\mathrm{exo}} = 0$. To this end, we perform charge/discharge simulations using a range of current profiles, including UDDS, LA92, SC09, and pulse profiles, under an assumed ambient temperature of $T_{\mathrm{amb}} = 25^\circ\mathrm{C}$. Underlying the simulations is a pseudo-2D physicochemical model embedded in GT-AutoLion~\cite{doyle:JES:1993}.

\begin{table*}[ht!]
\centering
\caption{Parameter values of the BattBee model identified from simulation.}
\begin{tabularx}{\textwidth}{X X X X X X X X}
\toprule
$C_b$ [F] & $C_s$ [F] & $R_b$ [$\Omega$] & $R_o$ [$\Omega$] & $C_\mathrm{core}$ [J/K] & $C_\mathrm{surf}$ [J/K] & $R_\mathrm{core}$ [K/W] & $R_\mathrm{surf,0}$  [K/W] \\ 
\midrule
76900.887 & 8115.772 & 1.236$\times10^{-3}$ & 4.322$\times10^{-3}$ & 162.760 & 168.129 & 0.020 & 3.865 \\
\bottomrule
\end{tabularx}
\label{table:BattBee_Model_Parameter_Sim}
\end{table*}
From the simulation data, we extract the full SoC-OCV relationship represented by $U(\cdot)$  and estimate the remaining parameters of the BattBee model using a Bayesian-optimization-based approach proposed in~\cite{Tu:ACC:2024}. The estimated parameter values are summarized in Table~\ref{table:BattBee_Model_Parameter_Sim}.

\begin{figure} [t]
 \begin{center}
 \includegraphics[width=0.46\textwidth]{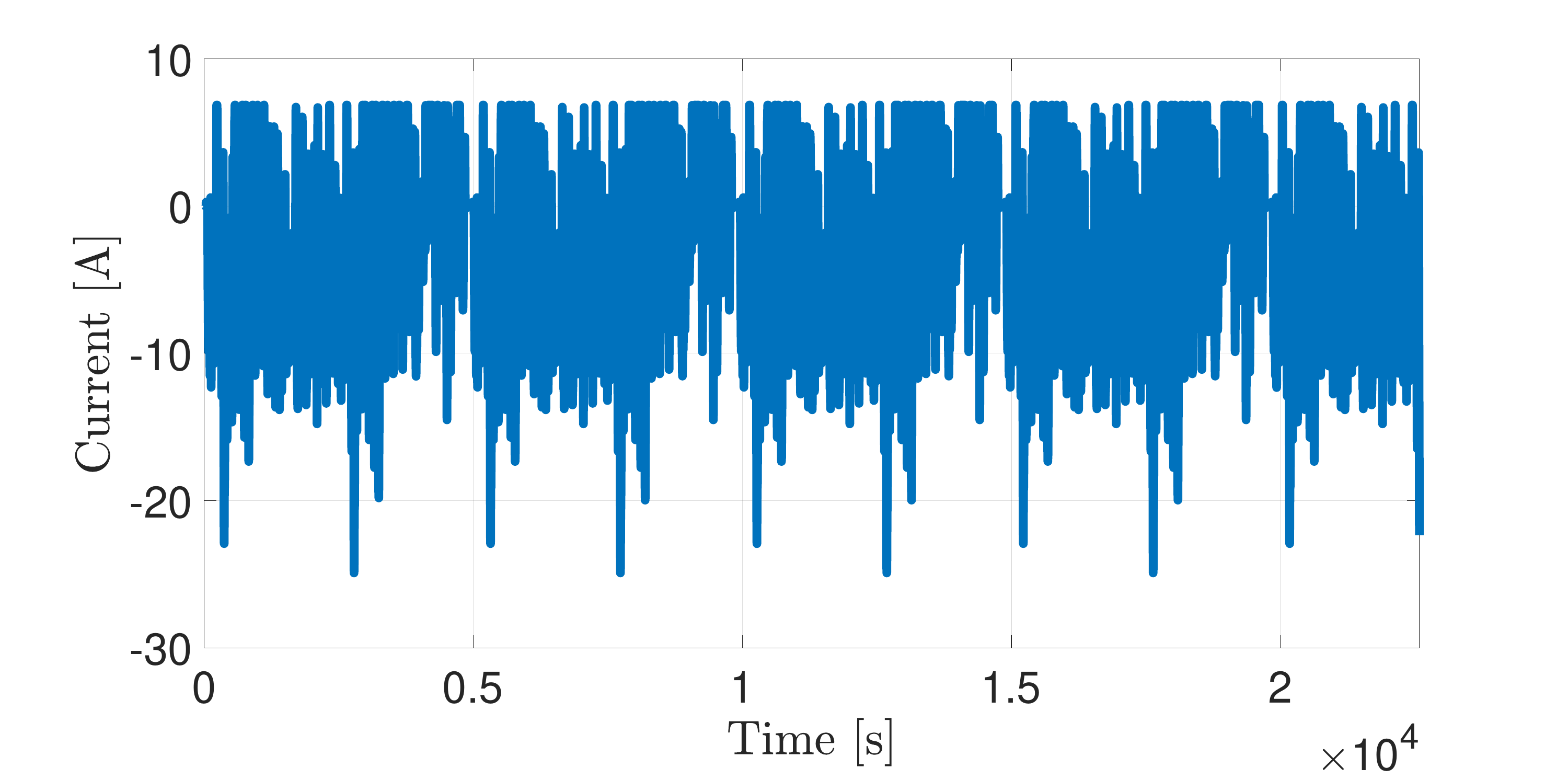}\\
 \caption{UDDS current profile.} \label{Fig:Sim_Val_Nom_I}
 \end{center}
\end{figure}
\begin{figure} [t]
 \begin{center}
 \includegraphics[width=0.46\textwidth]{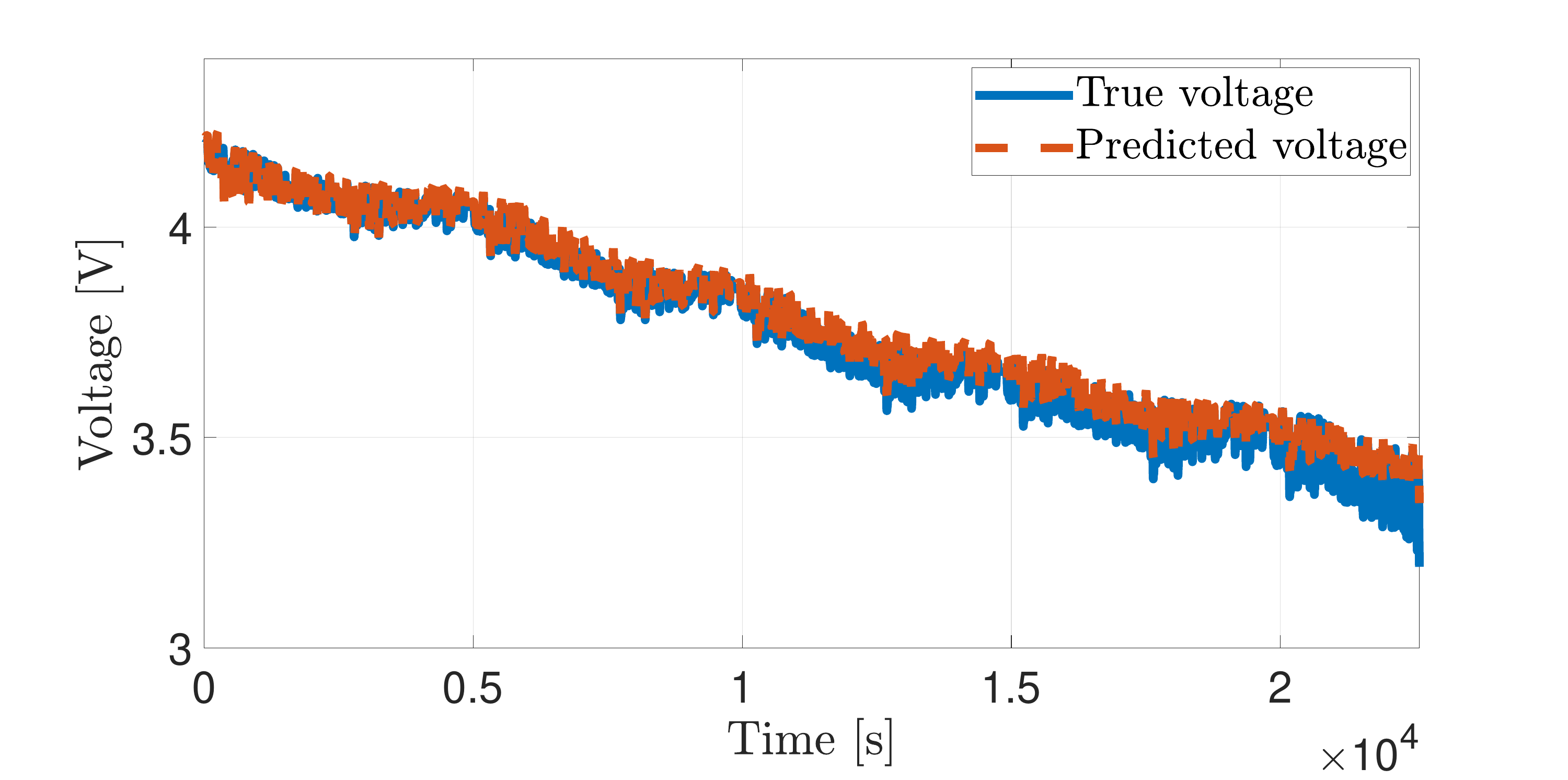}\\
 \subcaption[]{BattBee’s voltage prediction versus measurement under normal conditions.}
 \label{fig:Sim_Val_Nom_V}
 \includegraphics[width=0.46\textwidth]{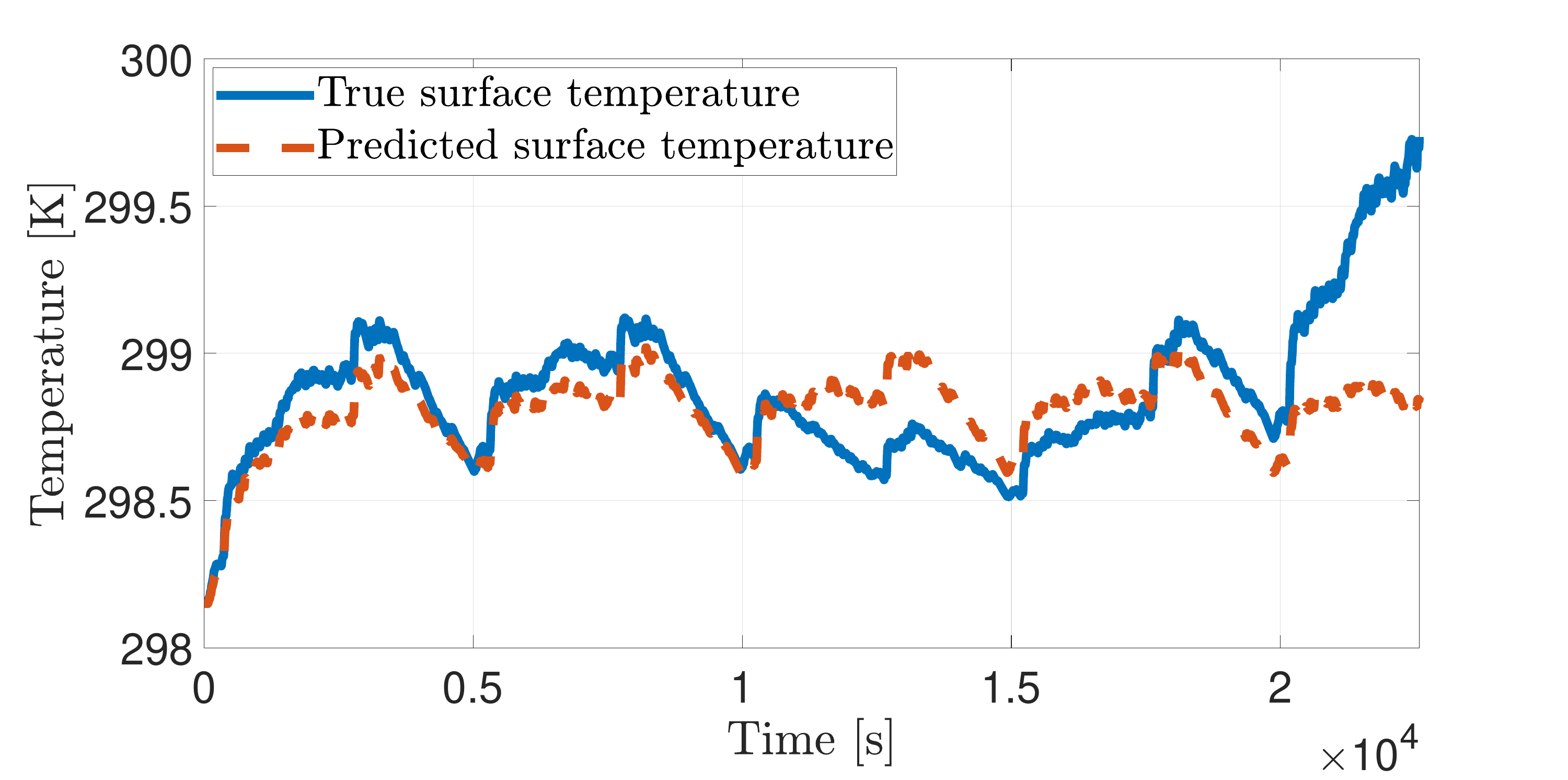}\\
 \subcaption[]{BattBee’s temperature prediction versus measurement under normal conditions.}
 \label{fig:Sim_Val_Nom_T}
 \end{center}
\caption{Simulation-based validation of the BattBee model under normal operating conditions.}
\label{Fig:Sim_Val_Nom}
\end{figure}
To evaluate predictive accuracy, we compare the BattBee model's voltage and temperature predictions with GT-AutoLion simulation results under a UDDS profile scaled to peak at 1 C (see Fig.~\ref{Fig:Sim_Val_Nom_I}). Fig.~\ref{Fig:Sim_Val_Nom} presents this comparison. Overall, the BattBee model demonstrates good agreement with the reference data. The average root mean squared error (RMSE) for voltage prediction is $33$ mV, and that for surface temperature prediction is $0.22$ K. Some discrepancies are observed at low SoC, resulting from the use of a constant $R_o$. We have adopted the choice of a fixed $R_o$ so that the BattBee model strikes a balance between model simplicity and accuracy. However, despite this simplification, the resulting level of accuracy still remains well within acceptable bounds for the purpose of TR and ISC detection.  

We proceed forward to evaluate the BattBee model under ISC-induced TR conditions. To this end, we first simulate an ISC event within GT-AutoLion-3D by introducing an electrically conductive element between the electrodes. This configuration leads to ISC-induced currents across the separator, causing charge losses and localized heating within the cell. As the ISC worsens and the cell's temperature exceeds a pre-set threshold, the simulation is continued in GT-AutoLion-1D to capture the subsequent evolution of the TR process. Based on the simulation data, we perform fine-tuning of the BattBee model's ISC- and TR-related parameters, including $R_{\mathrm{ISC},1}$, $R_{\mathrm{ISC},2}$, $h_{\mathrm{ec}}$, and $\alpha_i$ for $i=1,\ldots, 4$. 

\begin{figure} [t]
 \begin{center} 
 \includegraphics[width=0.46\textwidth]{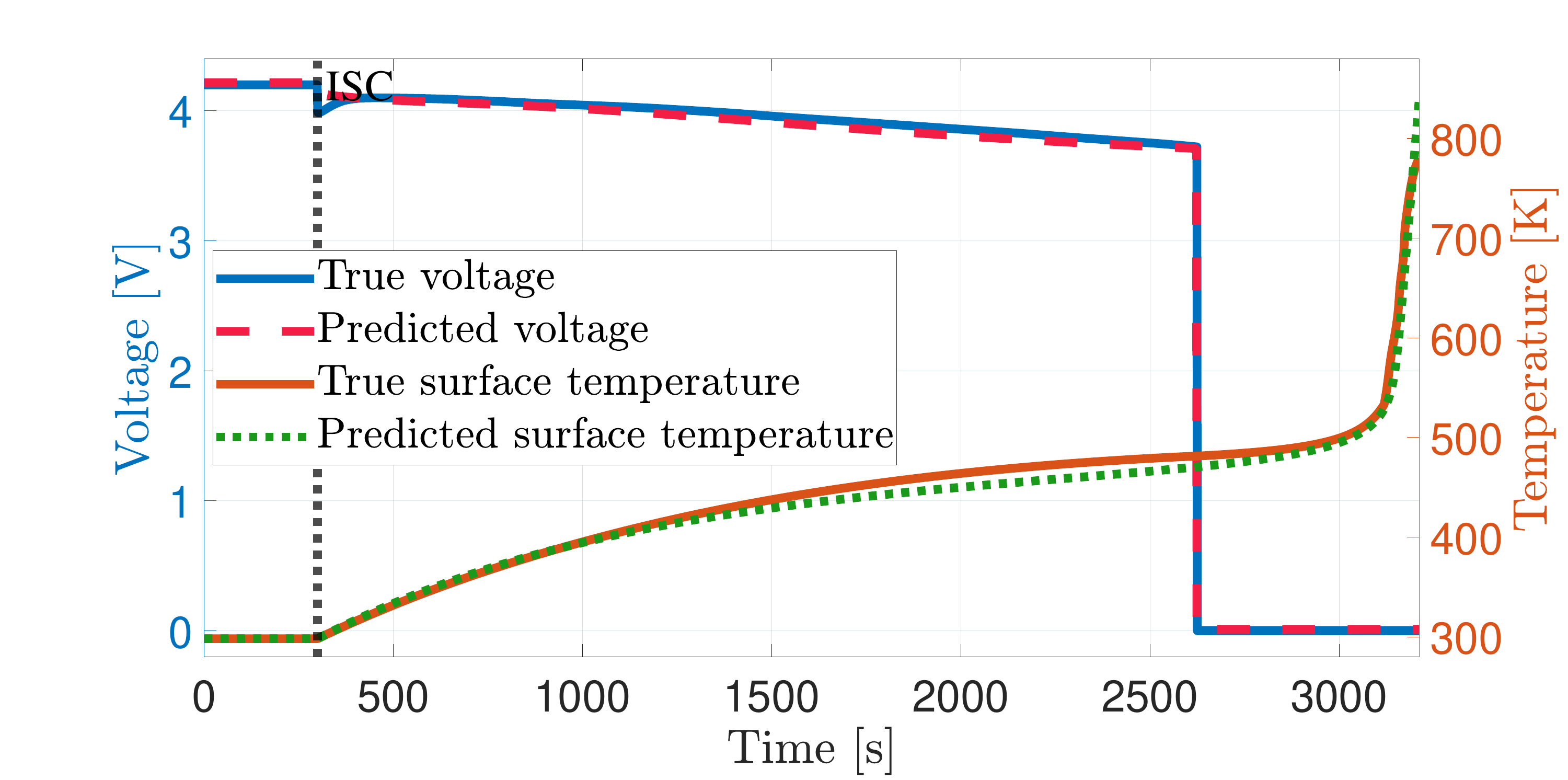}\\
 \subcaption[]{BattBee’s voltage and temperature prediction versus truth.}
 \label{fig:Sim_Val_ISC_V_T} 
 \includegraphics[width=0.46\textwidth]{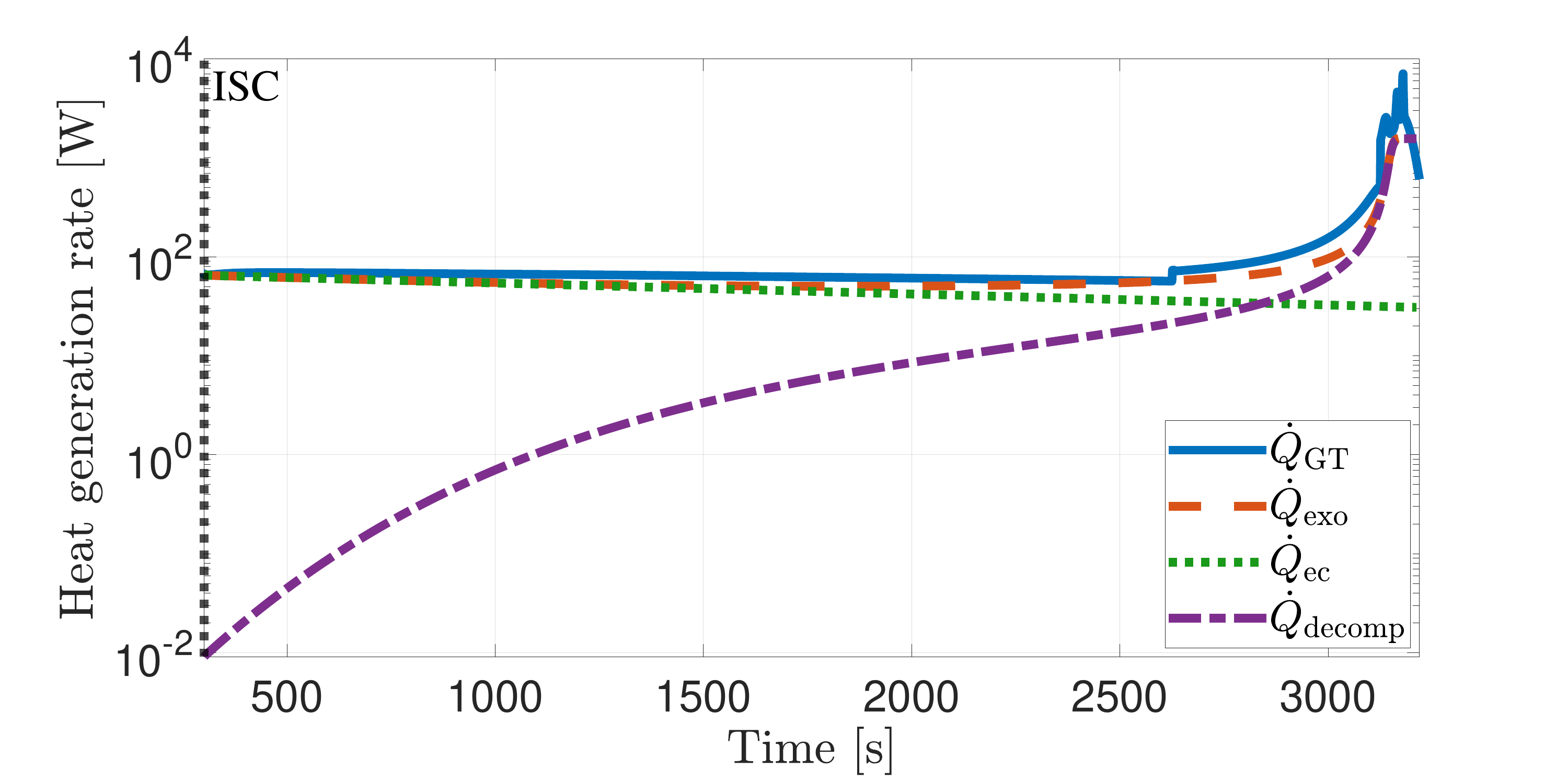}\\
 \subcaption[]{BattBee’s heat generation rates in TR.}
 \label{fig:Sim_Val_ISC_Q_dot}
 \end{center}
\caption{Simulation-based validation of the BattBee model in an ISC-induced TR event.}
\label{fig:Sim_Val_ISC}
\end{figure}

{Fig.}~\ref{fig:Sim_Val_ISC} presents the simulation and comparison. The cell, initially fully charged, experiences an ISC beginning at the $300$-th second. This triggers a voltage dip followed by a slight recovery. Subsequently, the voltage continues to decline gradually. Meanwhile, the surface temperature begins to increase at a fast pace. At the $2623$-th second, the ISC intensifies to a critical point, leading to an abrupt voltage collapse. Beyond this point, the temperature increases exponentially, driven by the onset of exothermic decomposition reactions that generate substantial heat. As shown in Fig.~\ref{fig:Sim_Val_ISC_V_T}, the BattBee model accurately predicts both voltage and temperature responses. The predictions not only closely follow the overall trends, but also capture all critical events with precision, including the ISC occurrence and the onset of the TR. Further, Fig.~\ref{fig:Sim_Val_ISC_Q_dot} illustrates the BattBee model's prediction of heat generation rates underlying the TR formation. After the ISC starts, the ISC-induced heat generation term, $\dot Q_{\mathrm{ec}}$, contributes significantly to the cell’s self-heating, while $\dot Q_{\mathrm{decomp}}$ begins to increase rapidly from a negligible level. Following the onset of the TR, $\dot Q_{\mathrm{decomp}}$ becomes the dominant contributor to heat generation. The total exothermic heat, $\dot Q_{\mathrm{exo}}$, summing up both $\dot Q_{\mathrm{ec}}$ and $\dot Q_{\mathrm{decomp}}$, closely matches the overall heat generation in the simulation by GT-AutoLion.

\subsection{Experimental Validation}

To further validate the BattBee model, we utilize open-source experimental datasets shared by Sandia National Laboratories~\cite{Lin:JES:2023}. These datasets were collected by indentation testing to induce ISCs, followed by TR events, for various cell types. They include measurements of cell voltage and temperature, among others.  The validation was performed using datasets collected from two cells, a $\chem{LiNi_{0.8}Mn_{0.1}Co_{0.1}O_2}$ cell with a capacity of $10$ Ah and dimensions of $60\times 162 \times 11$ mm, and a $\chem{LiCoO_2}$ cell with a capacity of $6.4$ Ah and dimensions of $49 \times 169 \times 8$ mm. Since the validation results for both cells appear qualitatively similar, we focus on presenting those for the $\chem{LiNi_{0.8}Mn_{0.1}Co_{0.1}O_2}$ cell.

\begin{figure} [t]
 \begin{center}
 \includegraphics[width=0.25\textwidth]{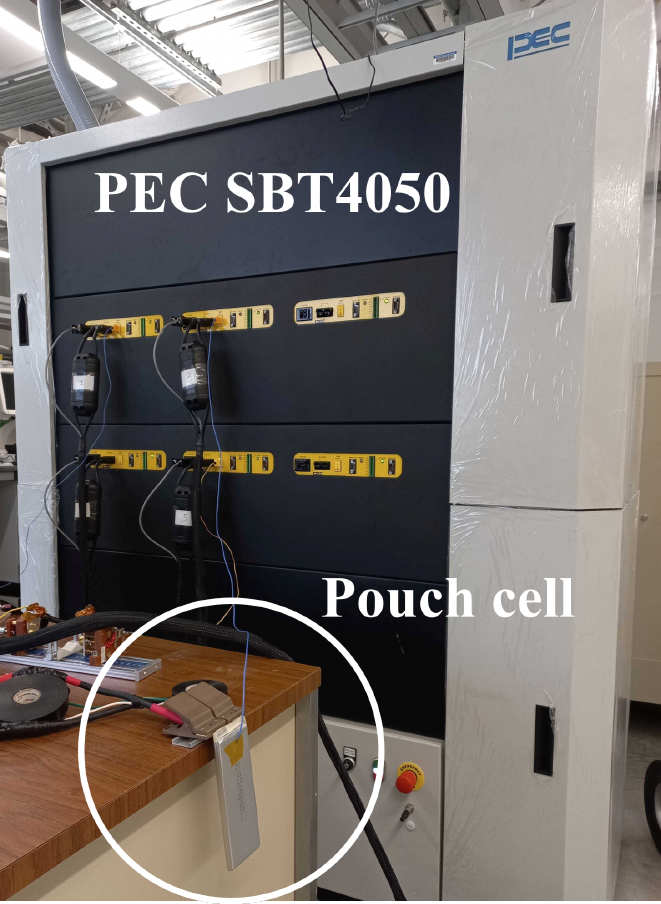} \caption{PEC\textsuperscript{\textregistered} SBT4050 battery testing system.} \label{Fig:PEC_SBT4050}
 \end{center}
\end{figure}

\begin{table*}[ht!]
\centering
\caption{Parameter values of the BattBee model identified from experiments.}
\begin{tabularx}{\linewidth}{X X X X X X X X}
\toprule
$C_b$ [F] & $C_s$ [F] & $R_b$ [$\Omega$] & $R_o$ [$\Omega$] & $C_\mathrm{core}$ [J/K] & $C_\mathrm{surf}$ [J/K] & $R_\mathrm{core}$ [K/W] & $R_\mathrm{surf,0}$  [K/W] \\ 
\midrule
13991.751 & 20003.407 & 4.721$\times10^{-3}$ & 4.726$\times10^{-3}$ & 85.539 & 10.519 & 0.834 & 9.936 \\
\bottomrule
\end{tabularx}
\label{table:BattBee_Model_Parameter_exp_cell_1}
\end{table*} 
As a first step, we acquired the same cell and conducted a series of charge/discharge tests at the University of Kansas. Performed on a PEC\textsuperscript{\textregistered} SBT4050 battery testing system with accuracy of $\pm 0.03 \%$ full scale deflection (see Fig.~\ref{Fig:PEC_SBT4050}), these tests included constant-current constant-voltage (CC/CV) charging and discharging by UDDS, LA92, SC09, and pulses. With the obtained data, we identify the BattBee model for the cell under normal fault-free operating conditions using the approach proposed in~\cite{Tu:ACC:2024}. Table~\ref{table:BattBee_Model_Parameter_exp_cell_1} summarizes the parameter estimation results. 

\begin{figure}[t]
 \begin{center}
 \includegraphics[width=0.46\textwidth]{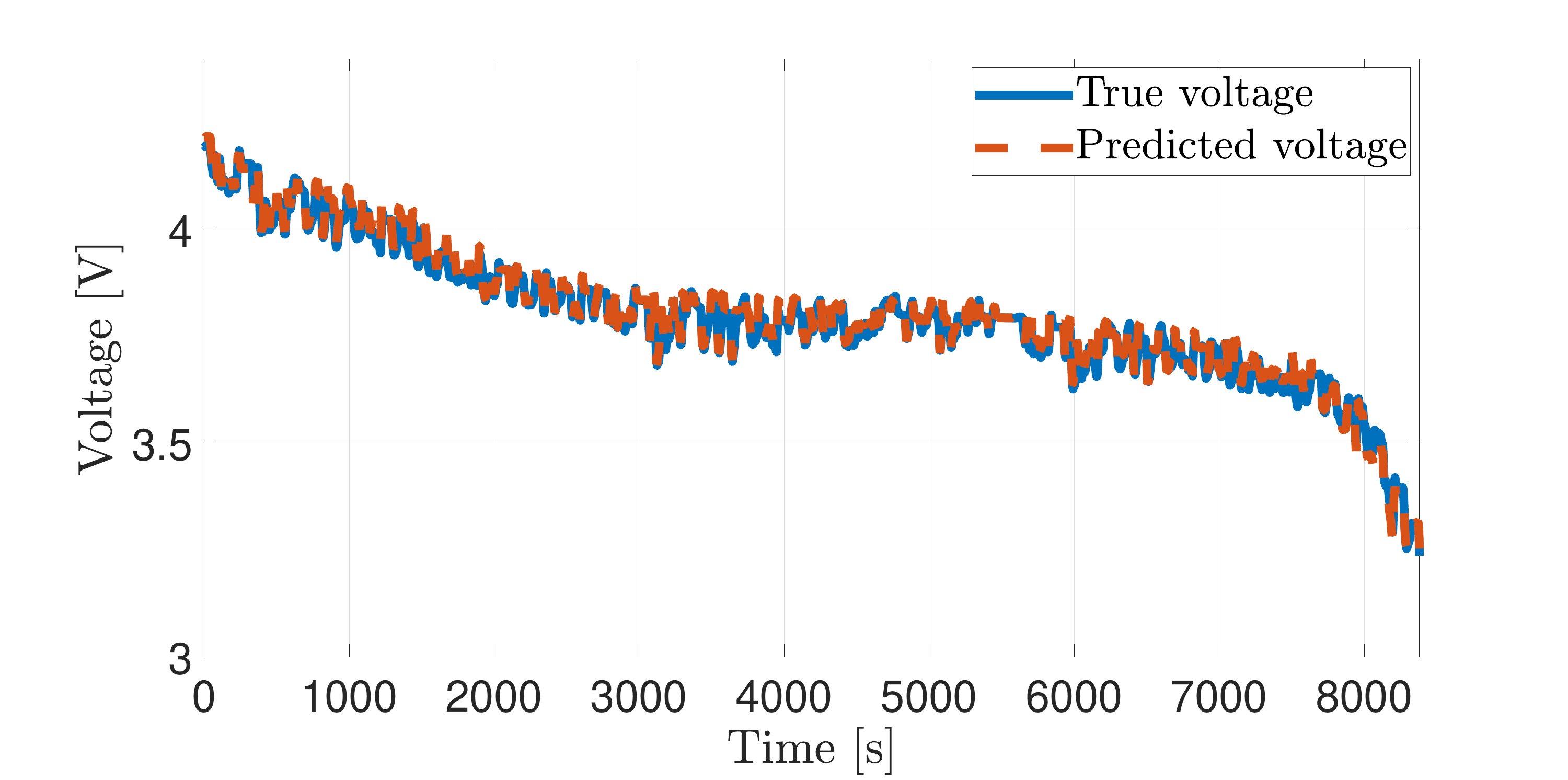}\\
 \subcaption[]{BattBee’s voltage prediction versus measurement under normal conditions.}
 \label{fig:Exp_Val_Nom_V} \includegraphics[width=0.46\textwidth]{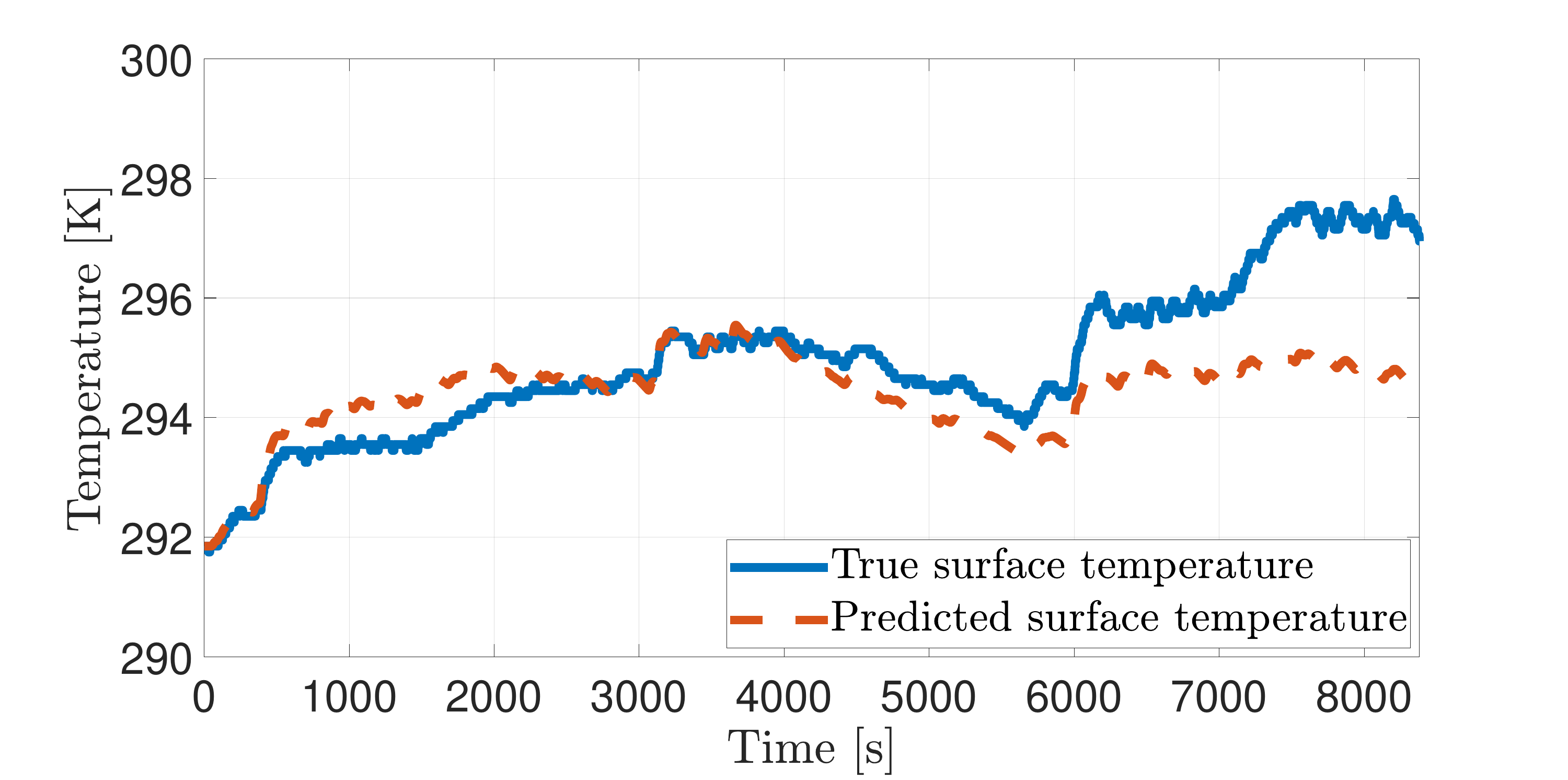}\\
 \subcaption[]{BattBee’s temperature prediction versus measurement under normal conditions.}
 \label{fig:Exp_Val_Nom_T}
 \end{center}
\caption{Experimental validation of the BattBee model under normal operating conditions.}
\label{fig:Exp_Val_Nom}
\end{figure}
Further, Fig.~\ref{fig:Exp_Val_Nom} compares the model predictions with the experimental data. Overall, the BattBee model demonstrates satisfactory accuracy in predicting both voltage and temperature. The minor discrepancy observed in the temperature prediction is attributed to the use of a constant $R_o$, which is acceptable given the model's main purpose of detecting ISC and TR events.

\begin{figure}[t]
 \begin{center} \includegraphics[width=0.4\textwidth]{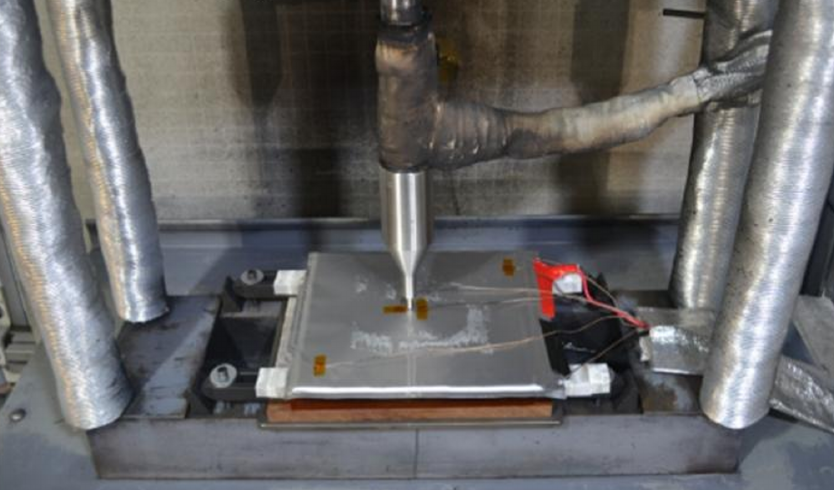}\\
 \caption{Hydraulically-driven mechanical indentation system at Sandia National Laboratories~\cite{Lin:JES:2023}.} \label{fig:Sandia_National_Lab_System}
 \end{center}
\end{figure}
We next evaluate the BattBee model's predictive capability under ISC and TR conditions. Fig.~\ref{fig:Sandia_National_Lab_System} shows the mechanically actuated, hydraulically driven indentation system at Sandia National Laboratories. This system drives an indenter to penetrate cells and induce ISCs. 
\begin{figure}[!t]
 \begin{center} \includegraphics[width=0.46\textwidth]{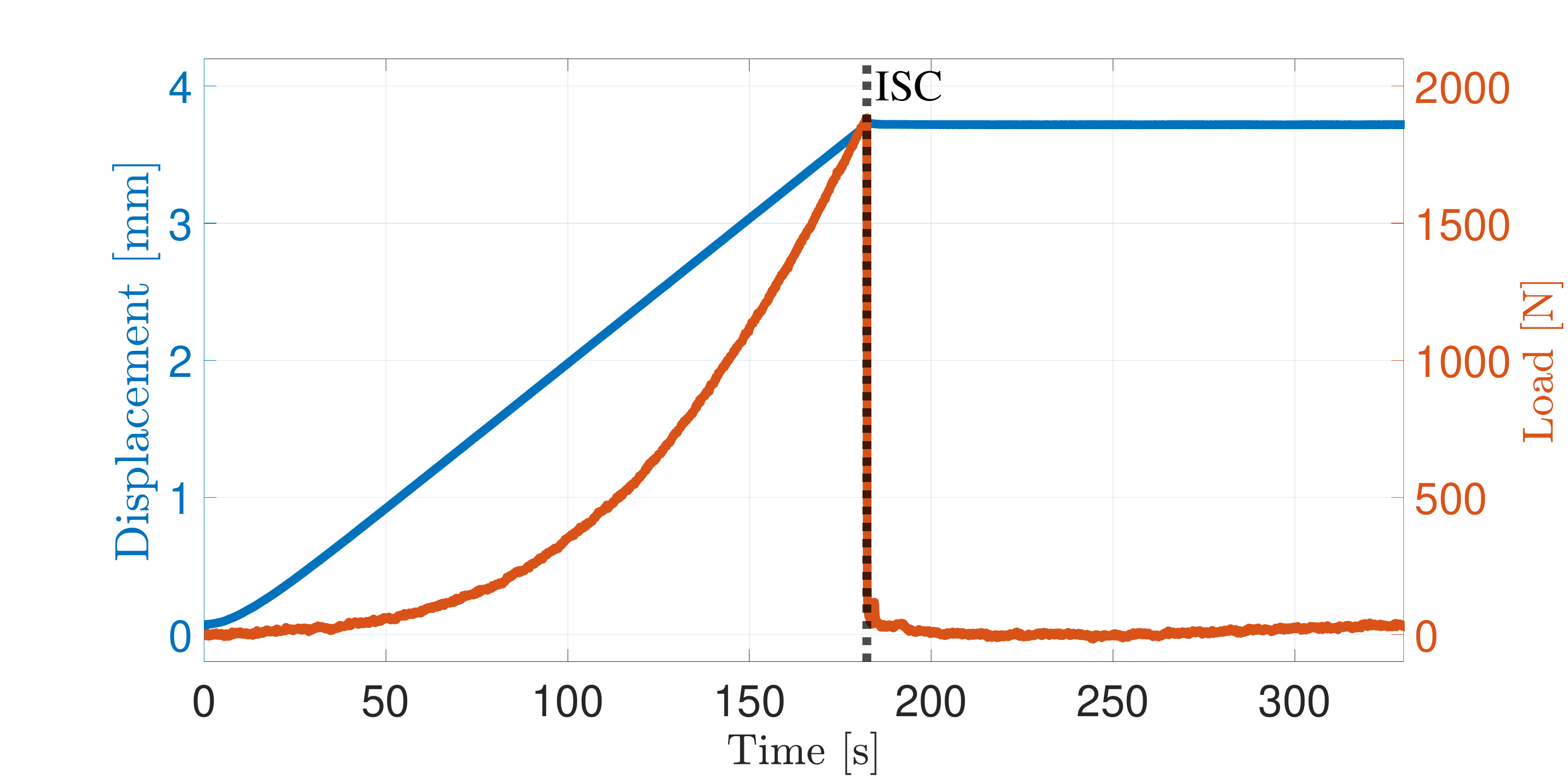}\\
 \caption{Indenter's displacement, and compressive force applied to the cell.} \label{fig:Exp_Val_ISC_Load_Disp}
 \end{center}
\end{figure}
Fig.~\ref{fig:Exp_Val_ISC_Load_Disp} presents displacement of the indenter and the corresponding compressive force applied to the cell. Upon penetration, an ISC is initiated and progressively intensifies.

\begin{figure}[t]\begin{center}\includegraphics[width=0.46\textwidth]{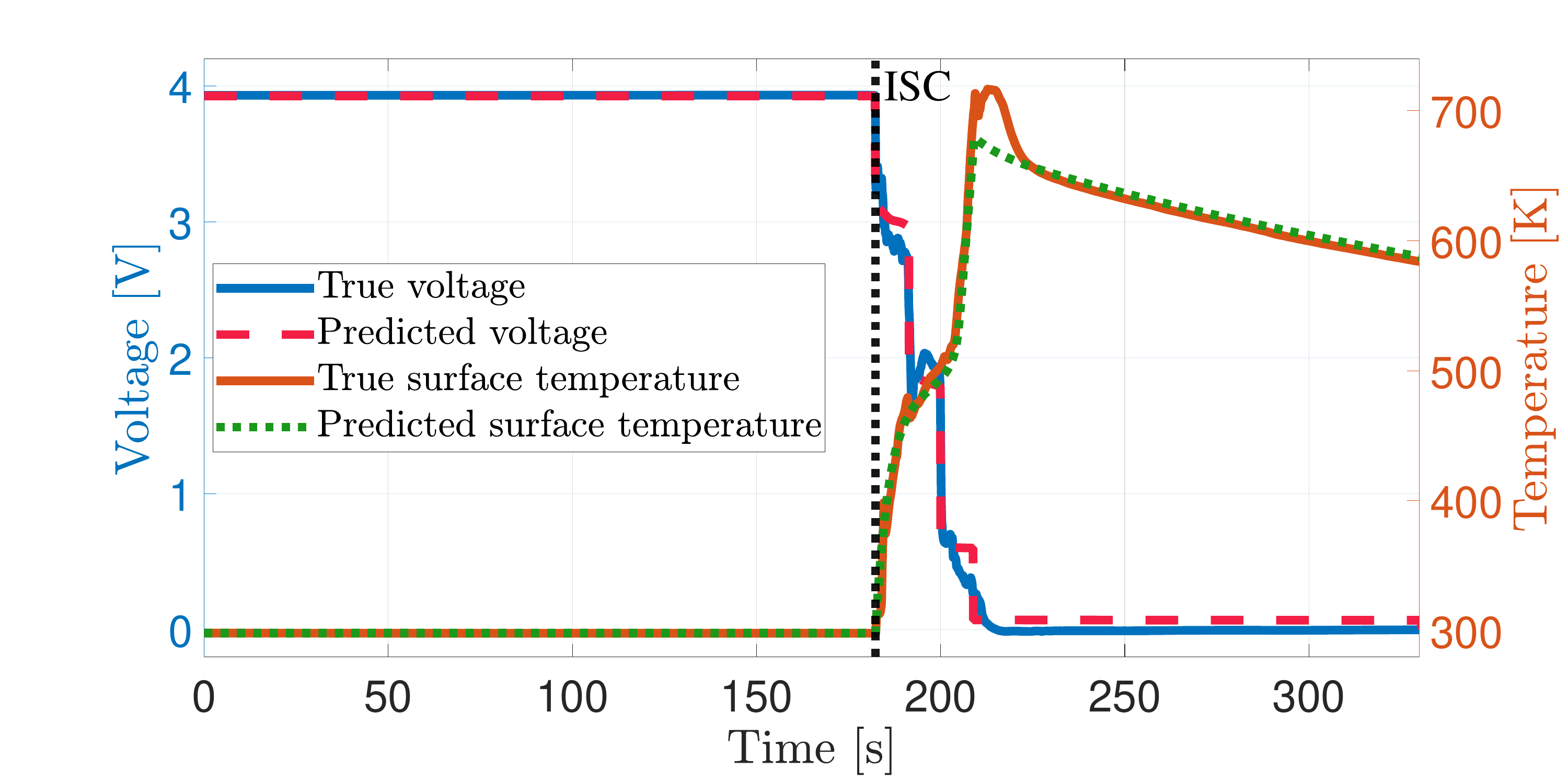}\\
 \subcaption[]{BattBee’s voltage and temperature prediction versus truth.}
 \label{fig:Exp_Val_ISC_V_T}
 \includegraphics[width=0.46\textwidth]{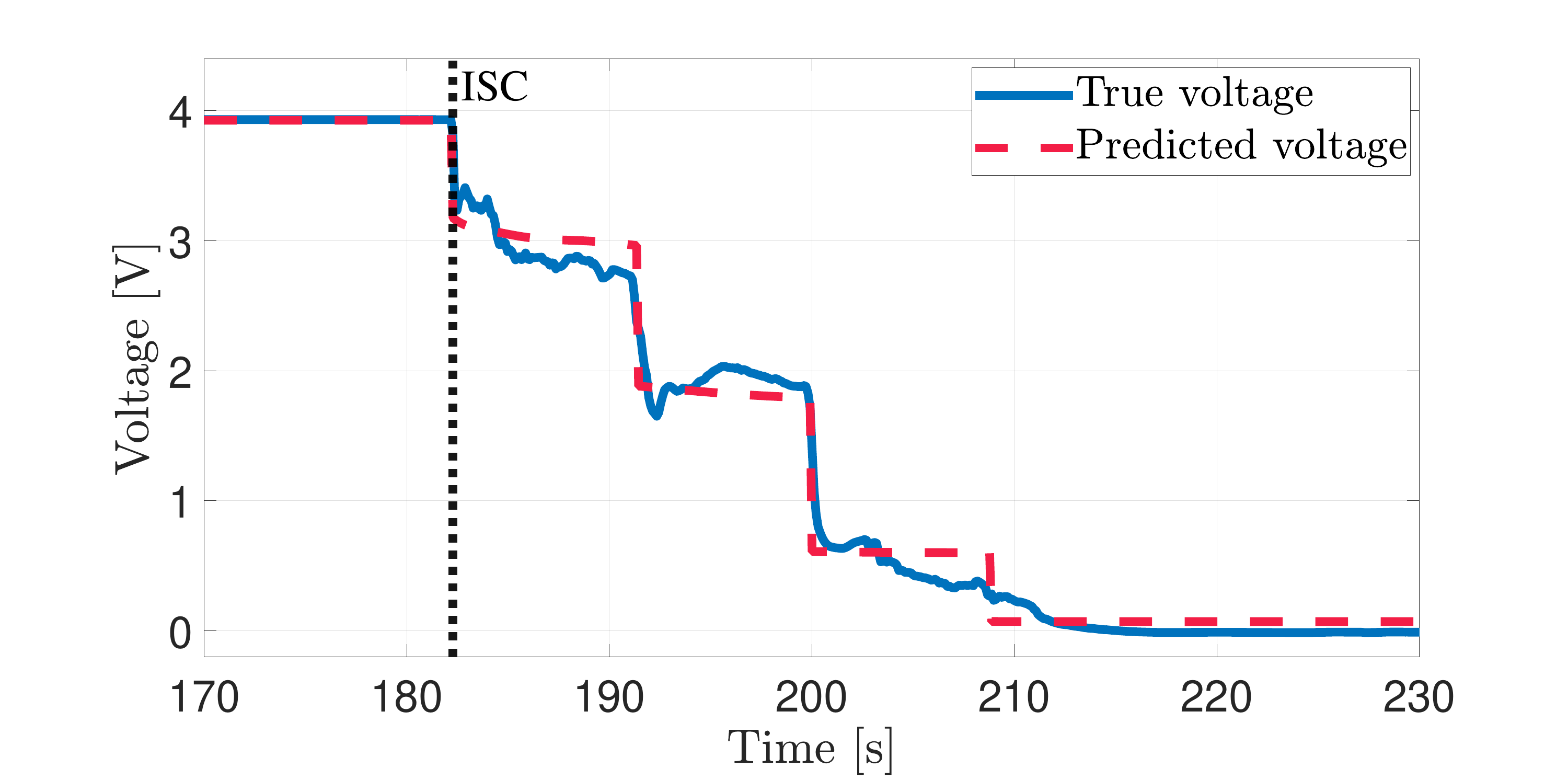}\\
 \subcaption[]{Zoomed-in view of BattBee’s voltage versus truth.}
 \label{fig:Exp_Val_ISC_V_zoom}
 \includegraphics[width=0.46\textwidth]{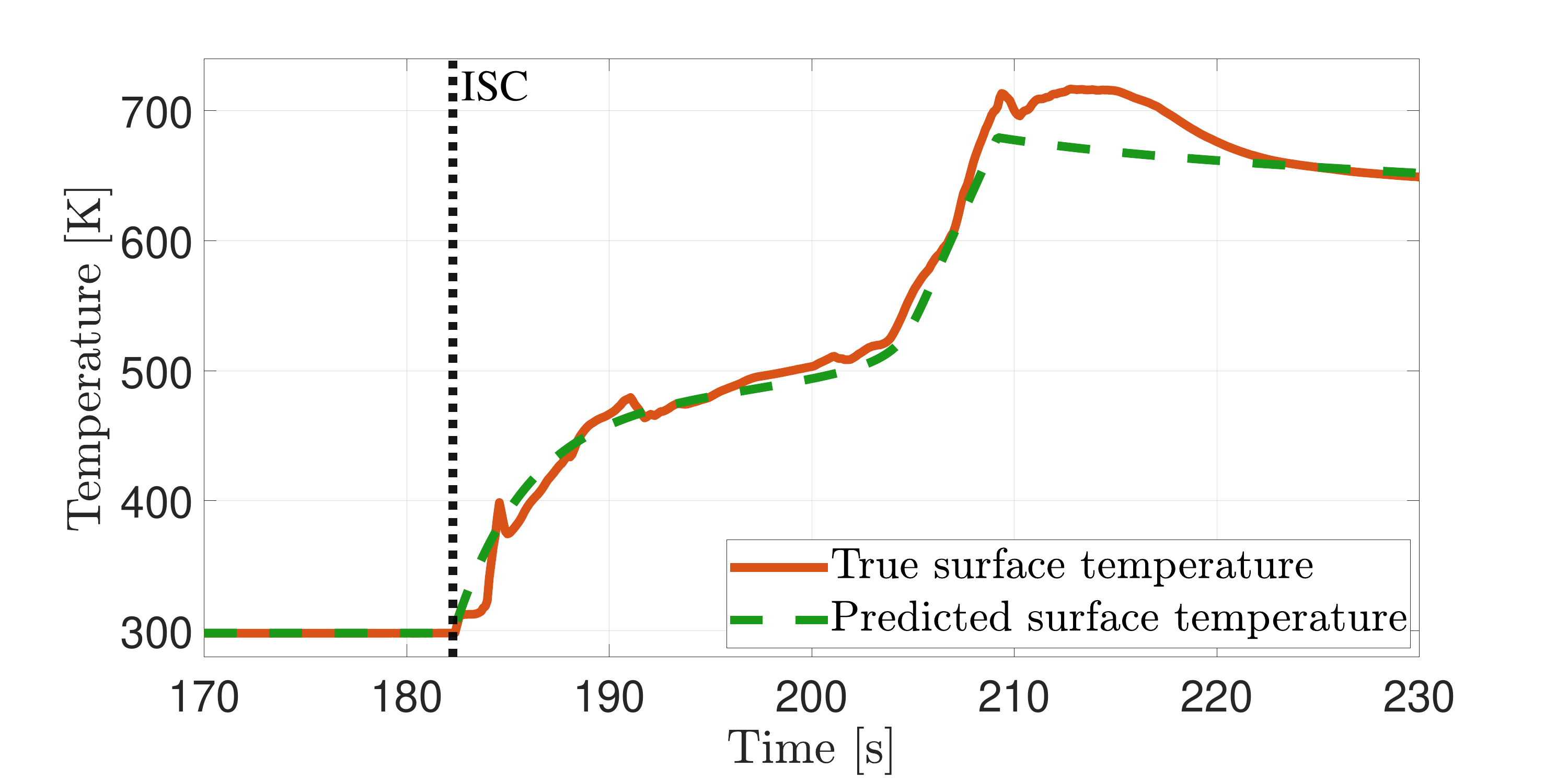}\\
 \subcaption[]{Zoomed-in view of BattBee’s temperature prediction versus truth.}
 \label{fig:Exp_Val_ISC_T_zoom}
 \end{center}
\caption{Experimental validation of the BattBee model under ISC-induced TR, with BattBee’s prediction versus measurement in voltage and temperature.}
\label{fig:Exp_Val_ISC}
\end{figure}
The resulting terminal voltage and surface temperature profiles are illustrated in Fig.~\ref{fig:Exp_Val_ISC}. As observed, the voltage drops rapidly, exhibiting distinct phases due to the growing severity of the ISC. Meanwhile, the temperature rises sharply. With fine-tuned parameters $R_{\mathrm{ISC},1}$ and $R_{\mathrm{ISC},2}$, the BattBee model accurately captures both the voltage and temperature behaviors throughout the event, as shown in Fig.~\ref{fig:Exp_Val_ISC}.

{\color{blue}To sum up, the validation results demonstrate the effectiveness of the BattBee model—it achieves sufficient accuracy with strong interpretability. Also, the model is fast for  computation and suitable for real-time safety monitoring and control, as consistently observed across all the validation cases, due to the model's simplified and low-order structure. Next up is the utilization of this model for TR and ISC detection design.}

\section{Model-Based TR and ISC Fault Detection} \label{Sec:TR-ISC-Fault-Detection}

Based on the BattBee model, we now develop a TR and ISC detection framework in this section. The proposed framework entails two key components: the design of a detection observer for residual signal generation, and the evaluation of residuals for fault detection. We perform the development for the BattBee model in its original form, and then specialize the results to a linearized case so as to obtain closed-form residual thresholds. We first carry out the development using the original nonlinear BattBee model and then specialize the results to a linearized version in order to derive closed-form conditions for fault detection.

To start with, we rewrite the BattBee model into a compact fault-explicit form as below:
\begin{align}\label{Fault-explicit-state-space-BattBee}
\left\{
\begin{aligned}
\dot  x &= (A + A_f f_1 ) x + B u + B_f f_2,\\
y &=h (x ) + Du +D_f f_3,
\end{aligned}
\right.
\end{align}
where all the matrices are defined in {{\color{blue}}Eq.} ($\star$), and
\begin{align*}
x &=\left[\begin{matrix}
V_b &V_s &T_\mathrm{core} &T_\mathrm{surf} 
\end{matrix}\right]^\top, \; 
u =\left[\begin{matrix}I & T_\mathrm{amb} &I^2 \end{matrix}\right]^\top, \\
y &=\left[\begin{matrix}V & T_\mathrm{surf} \end{matrix}\right]^\top, \; h(x) = \left[ \begin{matrix} U(\mathbf{e}_2 x) & \mathbf{e}_4 x \end{matrix} \right]^\top, \;  
f_1 = {1 \over R_{\mathrm{ISC},1} }, \\ 
f_2 &= \dot Q_\mathrm{exo}, \; 
f_3 = -\frac{R_o}{R_o + R_{\mathrm{ISC},2}} \left[ U(\mathbf{e}_2 x) + R_o I \right]. 
\end{align*}
Here, $f_i$ for $i=1,2,3$ are fault signals associated with the ISC and TR. They are zero or close to zero under normal operating conditions and become distinctly nonzero in the presence of a fault. Finally, $\mathbf{e}_i$ denotes the unit vector with a value of $1$ in the 
$i$-th entry and $0$ in all other entries.

\begin{figure*}[!htbp]  
\normalsize 
\begin{align} \tag{$\star$}
\begin{aligned}
A &=\left[\begin{matrix} -\frac{1}{R_bC_b}&\frac{1}{R_bC_b}&0&0\\\frac{1}{R_bC_s}&-\frac{1}{R_bC_s}&0&0\\0&0&-\frac{1}{R_\mathrm{core}C_\mathrm{core}}&\frac{1}{R_\mathrm{core}C_\mathrm{core}}\\0&0&\frac{1}{R_\mathrm{core}C_\mathrm{surf}}&-\frac{1}{R_\mathrm{core}C_\mathrm{surf}}-\frac{1}{R_\mathrm{surf}C_\mathrm{surf}}\end{matrix}\right] , \ \ \
A_f = \left[\begin{matrix} 0 & 0 & 0 & 0\cr
0 & -{1 \over C_s} & 0 & 0 \cr
 0 & 0 & 0 & 0\cr
 0 & 0 & 0 & 0
 \end{matrix}\right], \\ 
B&=\left[\begin{matrix}
0&0&0\\\frac{1}{C_s}&0&0\\0&0&\frac{R_o}{C_\mathrm{core}}\\0&\frac{1}{R_\mathrm{surf}C_\mathrm{surf}}&0
\end{matrix}\right], \ \ \
B_f = \left[\begin{matrix} 0 \cr 0 \cr {1 \over C_\mathrm{core}} \cr 0
 \end{matrix}\right], \ \ \
D = \left[\begin{matrix} R_o & 0 & 0 \cr 
0 & 0 & 0
 \end{matrix}\right], \ \ \
D_f = \left[\begin{matrix} 1 \cr 0
 \end{matrix}\right]. 
\end{aligned}
\end{align} 
\hrulefill 
\end{figure*}

Following~\cite{Seliger:Springer:2000,Yang:SCL:2015}, we consider the following fault detection observer for Eq.~\eqref{Fault-explicit-state-space-BattBee}:
\begin{subequations}\label{residual-dynamics}
\begin{align}
\dot {\hat x} &=A \hat x +Bu + L r,\\
r &= y- h(\hat x) - D u , 
\end{align}
\end{subequations}
where $\hat x$ is the observer's state, and $r$ is the residual in prediction of $y$. When there is no fault, $\hat x$ is an estimate of $x$, and $r$ is small in general. Defining $\tilde x = x - \hat x$ and using Eqs.~\eqref{Fault-explicit-state-space-BattBee}-\eqref{residual-dynamics}, we have
\begin{subequations}\label{residual-dynamics-rewrite}
\begin{align}
\dot {\tilde x } &= A \tilde x - L \Psi(x, \hat x)   \tilde x, \\ \label{r-xtilde-relation}
r & = \Psi(x, \hat x)  \tilde x,
\end{align}
\end{subequations}
when $f_i=0$ for $i=1,2,3$, where $L$ is a gain matrix, and 
\begin{align*}
\Psi(x, \hat x) &= \begin{bmatrix}
0 & \psi(x, \hat x)  & 0 & 0 \cr
0 & 0 & 0 & 1
\end{bmatrix}, \\ 
\psi(x, \hat x) &= { U( \mathbf{e}_2 x) - U \left( \mathbf{e}_2 \hat x \right) \over \mathbf{e}_2 \left( x - \hat x \right) }.
\end{align*}
For a LiB cell, the OCV function $U(\cdot)$ is inherently Lipschitz-continuous and monotonically increasing. It hence holds that
\begin{align*}
 \underline{\psi} \leq \psi(x, \hat x) \leq \bar \psi,
\end{align*}
where $ \underline{\psi}, \bar \psi>0$ are the minimum and maximum slopes of $U(\cdot)$. 

\begin{thm}\label{Detection-Observer-Stability}
When $f_i=0$ for $i=1,2,3$, the fault detection observer in Eq.~\eqref{residual-dynamics} is asymptotically stable if there exist $P>0$, $Q>0$, and $L$ such that 
\begin{align}\label{LMI-condition}
A^\top P +P A - H^\top L^\top P - P L H +Q \leq 0,
\end{align}
for 
\begin{align*} \color{blue}
H = 
\begin{bmatrix}
0 & \underline \psi  & 0 & 0 \cr
0 & 0 & 0 & 1
\end{bmatrix} \ \mbox{and} \ 
H = 
\begin{bmatrix}
0 & \bar \psi  & 0 & 0 \cr
0 & 0 & 0 & 1
\end{bmatrix}. 
\end{align*}
\end{thm}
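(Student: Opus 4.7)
The plan is a standard polytopic/sector-bounded Lyapunov argument, exploiting the fact that the time-varying matrix $\Psi(x,\hat x)$ in~\eqref{residual-dynamics-rewrite} is affine in the single scalar $\psi(x,\hat x)$, which by the Lipschitz and monotonicity properties of $U(\cdot)$ is sandwiched between $\underline\psi$ and $\bar\psi$. First I would pick the quadratic Lyapunov candidate $V(\tilde x)=\tilde x^\top P\tilde x$ with $P>0$ as given in the hypothesis, and differentiate it along the error dynamics $\dot{\tilde x}=(A-L\Psi(x,\hat x))\tilde x$ from~\eqref{residual-dynamics-rewrite}. This yields
\begin{equation*}
\dot V=\tilde x^\top\!\bigl[A^\top P+PA-\Psi(x,\hat x)^\top L^\top P-PL\,\Psi(x,\hat x)\bigr]\tilde x.
\end{equation*}

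Next I would observe that the two candidate matrices $H_1$ and $H_2$ in the theorem differ from $\Psi(x,\hat x)$ only in the $(1,2)$-entry, where the scalar $\psi(x,\hat x)$ appears. Because $\underline\psi\le \psi(x,\hat x)\le\bar\psi$, I can write $\psi(x,\hat x)=(1-\mu)\underline\psi+\mu\bar\psi$ for some $\mu=\mu(x,\hat x)\in[0,1]$, whence
\begin{equation*}
\Psi(x,\hat x)=(1-\mu)H_1+\mu H_2.
\end{equation*}
Substituting this convex combination into $\dot V$ makes the bracketed matrix itself a convex combination of the two vertex expressions $A^\top P+PA-H_i^\top L^\top P-PLH_i$ for $i=1,2$.

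At this point the LMI hypothesis~\eqref{LMI-condition}, which by assumption holds for both choices of $H$, applies at each vertex and gives
\begin{equation*}
A^\top P+PA-H_i^\top L^\top P-PLH_i\le -Q,\quad i=1,2.
\end{equation*}
Taking the convex combination with weights $(1-\mu,\mu)$ preserves the inequality, so that
\begin{equation*}
\dot V\le -\tilde x^\top Q\,\tilde x<0 \quad \text{for } \tilde x\neq 0,
\end{equation*}
since $Q>0$. Asymptotic stability of the error dynamics, and hence of the observer in~\eqref{residual-dynamics} under $f_i=0$, follows from standard Lyapunov theory.

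The only real subtlety is justifying the sector bound $\underline\psi\le\psi(x,\hat x)\le\bar\psi$ uniformly over the admissible trajectories: this rests on the differential-mean-value/secant interpretation of $\psi(x,\hat x)$, which is well defined because $U(\cdot)$ is Lipschitz continuous and strictly increasing on the operating range $V_s\in[0,1]$ (or extended by continuity at equal arguments). Once this is in hand the convex combination step is routine, so the argument reduces to a textbook polytopic LMI-based observer design; the bulk of the work is set-up rather than a deep estimate.
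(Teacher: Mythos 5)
Your proposal is correct and follows essentially the same route as the paper: a quadratic Lyapunov function $V=\tilde x^\top P\tilde x$ differentiated along the error dynamics, with the LMI at the two vertices $H_1,H_2$ implying $\dot V\le-\tilde x^\top Q\tilde x$ for every admissible $\psi(x,\hat x)\in[\underline\psi,\bar\psi]$. The only difference is that the paper delegates the vertex-to-interior step to a citation of Zemouche's result, whereas you spell out the convex-combination argument explicitly — which is a welcome addition, not a deviation.
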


\begin{proof}
Consider the Lyapunov function candidate $V(\tilde x) = {1 \over 2} \tilde x^\top P \tilde x$ (the symbol $V$ is slightly abused here to follow the notational convention in Lyapunov stability analysis). Given {Eq.}~\eqref{residual-dynamics-rewrite}, $\dot V$ is
\begin{align*}
\dot V =\tilde x^\top \left[ A - L \Psi(x, \hat x)  \right]^\top P \tilde x + \tilde x^\top P \left[ A   - L \Psi(x, \hat x)  \right] \tilde x.
\end{align*} 
According to~\cite{Zemouche:AUTO:2013}, if Eq.~\eqref{LMI-condition} holds, we have
\begin{align*}
\dot V \leq - \tilde x^\top Q \tilde x<0,
\end{align*}
for $\tilde x \neq 0$. Therefore, the observer in Eq.~\eqref{residual-dynamics} is asymptotically stable. 
\end{proof}

Theorem~\ref{Detection-Observer-Stability} indicates that, when there is no fault, $ r \rightarrow 0$ when $t\rightarrow\infty$. But it is often inadequate to compare $r$ to zero in practice to identify the occurrence of a fault, due to the transient behavior of $r$. To overcome this issue, we instead consider the following residual evaluation functions:
\begin{subequations}\label{J2-Jinfty-def}
\begin{align}
J_2 (t) &= \sqrt{ \int_0^t \|r(\tau )\|^2 d \tau }, \\ 
J_\infty (t) &=  \sup \left\{ \|r(\tau ) \|, 0\leq \tau<t \right\}, 
\end{align}
\end{subequations} 
where $ \| \cdot \|$ denotes the Euclidean norm of a vector. The design of $J_2 $ and $J_\infty$ is inspired by norms for signals---note that $J_2(\infty)$ and $J_\infty(\infty)$ are the $\mathcal{L}_2$- and $\mathcal{L}_\infty$-norms of the residual signal $r$ under fault-free conditions, respectively. 
The following lemma shows the upper bounds for $J_2$ and $J_\infty$. 

\begin{lem}\label{Threshold-lemma}
Suppose that $f_i=0$ for $i=1,2,3$ for Eq.~\eqref{Fault-explicit-state-space-BattBee}, and that Theorem~\ref{Detection-Observer-Stability} holds. If $\| \tilde x (0) \| \leq \delta $ and ${\epsilon \over 2} P - Q \leq 0$ for some $\epsilon >0$, then
\begin{align}\label{Detection-conditions}
J_2(t) \leq J_{2,\threshold}, \ \ \ 
J_\infty(t) \leq J_{\infty,\threshold},
\end{align}
where
\begin{align*}
 J_{2, \threshold} &= \sqrt{ \max\left\{ \bar \psi^2, 1 \right\} \cdot \lambda_{\max}(P)  \over \epsilon \cdot \lambda_{\min} (P)}  \delta , \\
J_{\infty, \threshold} &= \sqrt{ \max\left\{ \bar \psi^2, 1 \right\} \cdot \lambda_{\max} (P) \over \lambda_{\min} (P)} \delta .
\end{align*}
\end{lem}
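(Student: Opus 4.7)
The plan is to leverage the Lyapunov machinery from Theorem~\ref{Detection-Observer-Stability} and then convert the resulting bounds on $\tilde x$ into bounds on $\|r\|$ through the explicit structure of $\Psi(x,\hat x)$. Two preliminary facts will do most of the work. First, under fault-free operation the Lyapunov argument in the proof of Theorem~\ref{Detection-Observer-Stability} supplies both the monotone decay $V(\tilde x(t))\le V(\tilde x(0))$ and, by integrating the dissipation inequality, $\int_0^t\tilde x^\top Q\,\tilde x\,d\tau\le V(\tilde x(0))-V(\tilde x(t))\le V(\tilde x(0))$. Second, since $\Psi(x,\hat x)$ has nonzero entries only in its second and fourth columns, a direct expansion gives $\|r\|^2=\tilde x^\top\Psi^\top\Psi\tilde x=\psi^2\tilde x_2^2+\tilde x_4^2\le\max\{\bar\psi^2,1\}\,\|\tilde x\|^2$, using the slope bound $\psi\le\bar\psi$.

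For the $J_\infty$ bound I would combine monotone decay of $V$ with the standard sandwich $\lambda_{\min}(P)\|\tilde x\|^2\le\tilde x^\top P\tilde x\le\lambda_{\max}(P)\|\tilde x\|^2$ and the initial condition $\|\tilde x(0)\|\le\delta$ to propagate forward, obtaining $\|\tilde x(t)\|^2\le(\lambda_{\max}(P)/\lambda_{\min}(P))\,\delta^2$ for every $t\ge 0$. Substituting into the residual inequality above and taking the supremum over $\tau\in[0,t)$ reproduces exactly $J_\infty(t)\le J_{\infty,\threshold}$.

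For the $J_2$ bound the new ingredient is the assumption $\epsilon P-Q\le 0$, i.e.\ $Q\ge\epsilon P$. This converts the integrated $Q$-dissipation into an integrated $\|\tilde x\|^2$-bound via $\tilde x^\top Q\,\tilde x\ge\epsilon\,\tilde x^\top P\tilde x\ge\epsilon\lambda_{\min}(P)\,\|\tilde x\|^2$, yielding $\int_0^t\|\tilde x\|^2\,d\tau\le V(\tilde x(0))/(\epsilon\lambda_{\min}(P))$. Applying the upper sandwich bound to write $V(\tilde x(0))$ in terms of $\lambda_{\max}(P)\delta^2$, multiplying by $\max\{\bar\psi^2,1\}$, and taking a square root then produces the threshold $J_2(t)\le J_{2,\threshold}$.

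The whole argument is a chain of elementary estimates rather than a new idea, so I do not expect a substantive obstacle. The one subtlety worth flagging is that the LMI in Theorem~\ref{Detection-Observer-Stability} is stated only at the two extremal slopes $H=\underline\psi$ and $H=\bar\psi$; because the LMI is linear in $H$, it automatically extends by convexity to every realized slope $\psi(x(t),\hat x(t))\in[\underline\psi,\bar\psi]$, which is what legitimizes applying the dissipation inequality pointwise in time along the true trajectory. The remaining care is purely in bookkeeping the normalization of $V$ so that the constants match those in the statement.
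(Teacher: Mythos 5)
Your proof is correct and follows essentially the same route as the paper's: the Lyapunov function and dissipation inequality inherited from Theorem~\ref{Detection-Observer-Stability}, the eigenvalue sandwich $\lambda_{\min}(P)\|\tilde x\|^2 \le \tilde x^\top P \tilde x \le \lambda_{\max}(P)\|\tilde x\|^2$, and the key residual bound $\|r\|^2 = \psi^2\tilde x_2^2 + \tilde x_4^2 \le \max\{\bar\psi^2,1\}\|\tilde x\|^2$. The only (immaterial) difference is that the paper first converts the dissipation to $\dot V \le -\epsilon V$ and invokes the comparison principle to get $V(t)\le e^{-\epsilon t}V(0)$ before integrating $\|r\|^2$, whereas you integrate the $Q$-dissipation directly via $Q \ge \epsilon P \ge \epsilon\lambda_{\min}(P)I$; both yield exactly the stated thresholds, and your convexity remark about the two extremal-slope LMIs is the correct reading of the cited Zemouche-type argument.
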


\begin{proof}
Following Theorem~\ref{Detection-Observer-Stability} and provided that ${ \epsilon \over 2} P- Q \leq 0$, we  have
\begin{align*}
\dot V \leq - \tilde x^\top Q \tilde x \leq - \epsilon V. 
\end{align*}
By the comparison principle~\cite{Khalil:Pearson:2001}, $ V \leq e^{-\epsilon t} V (0) $. Since $V \geq \lambda_{\min} (P) \cdot \| \tilde x \|^2$, we have
\begin{align*}
 \| \tilde x \|^2 \leq  { e^{-\epsilon t} \over \lambda_{\min} (P) } V (0)  \leq { \lambda_{\max} (P) \over \lambda_{\min} (P) } \delta^2  e^{-\epsilon t}
\end{align*}
Then, by Eq.~\eqref{r-xtilde-relation}, we have
\begin{align*}
\|r\|^2 &= \tilde x^\top \Psi^\top (x, \hat x) \Psi(x, \hat x) \tilde x \leq \tilde x^\top \bar \Psi^\top \bar \Psi  \tilde x 
\\ &\leq \max\left\{ \bar \psi^2, 1 \right\} \cdot \|\tilde x\|^2 \\ 
& \leq { \max\left\{ \bar \psi^2, 1 \right\} \cdot \lambda_{\max} (P) \over \lambda_{\min} (P)} \delta^2 e^{-\epsilon t} 
\end{align*} 
With this, we can obtain $J_2(t) \leq J_2(\infty) \leq J_{2, \threshold} $ and $J_\infty(t) \leq J_\infty(\infty) \leq J_{\infty, \threshold} $, thus proving Eq.~\eqref{Detection-conditions}. 
\end{proof}

Based on Lemma~\ref{Threshold-lemma}, we introduce the following decision logic for fault detection:
\begin{equation}\label{detection-logic}
\begin{aligned}
J_p(t) &\leq J_{p,\threshold} \ \Rightarrow \ \mbox{no TR/ISC fault has occurred},\\
J_p(t) &> J_{p,\threshold} \ \Rightarrow \ \mbox{a TR/ISC fault has occurred},
\end{aligned}
\end{equation}
where $p=2$ or $\infty$. By applying this logic in conjunction with continuous residual evaluation by computing $J_2(t)$ or $J_\infty(t)$, fault occurrence can be monitored and identified. 

For the fault-free version of Eq.~\eqref{Fault-explicit-state-space-BattBee}, the only nonlinearity appears in the OCV function $U(\cdot)$. This function is amenable to piecewise linearization for LiBs~\cite{Farakhor:TTE:2023}. Leveraging this property, we can derive some direct and tighter thresholds for $J_2$ and $J_\infty$. Along this line, we approximate $U(V_s)$ as a set of $m$ linear functions:
\begin{align*}
U(V_s) = a_i V_s + b_i, 
\end{align*} 
where $a_i$ and $b_i$ are the slope and intercept of the $i$-th segment when $ \undersl{V}_{s,i} \leq V_s < \oversl V_{s,i}$ for $i=1,2,\ldots,m$. Here $ \undersl{V}_{s,i}$ and $\oversl V_{s,i}$ are the lower and upper bounds of the $i$-th segment. Then, we can translate Eq.~\eqref{Fault-explicit-state-space-BattBee} into a collection of linear models, $\mathcal{M}_i$ for $i=1,2,\ldots, m$:
\begin{align}\label{Fault-explicit-state-space-BattBee-linearized}
\mathcal{M}_i: \left\{
\begin{aligned}
\dot  x &= (A + A_f f_1 ) x + B u + B_f f_2,\\
z &= C_i x + Du +D_f f_3,
\end{aligned}
\right.
\end{align}
where
\begin{align*}
z &= \begin{bmatrix}
V - b_i \cr T_\mathrm{surf}
\end{bmatrix}, \ \ \ 
C_i = \begin{bmatrix}
0 & a_i & 0 & 0 \cr
0 & 0 & 0 & 1
\end{bmatrix}. 
\end{align*}
Given $\mathcal{M}_i$, a linear fault detection observer can be designed following Eq.~\eqref{residual-dynamics} and~\cite{Patton-Frank-Clark:Springer:2000,Li-Ding:AUTO:2017,Xu:CCA:2007}:
\begin{subequations}\label{residual-dynamics-linearized}
\begin{align}\label{State-observer-linearized}
\dot {\hat x} &= A \hat x + B u + L_i r,  \\
r &= z - C_i \hat x - Du.
\end{align}
\end{subequations}
In the case of no fault, we have 
\begin{align}\label{State-residual-when-no-fault}
\dot {\tilde x } & = \tilde A_i \tilde x, \ \ \
r  = C_i  \tilde x,
\end{align}
where $ \tilde A_i = A   - L_i C_i $. As a result, 
\begin{align*}
r = C_i e^{\tilde A_it} \tilde x(0).
\end{align*}
To evaluate the residual's behavior, we continue to use $J_2$ and $J_\infty$ in Eq.~\eqref{J2-Jinfty-def}. 

\begin{lem}\label{Threshold-lemma-linearized}
Suppose that $f_i=0$ for $i=1,2,3$ for Eq.~\eqref{Fault-explicit-state-space-BattBee-linearized}, that $(A,C_i)$ is detectable for $i=1,2,\ldots,m$, and that $\| \tilde x (0) \| \leq \delta $. Then, in the regime of $\mathcal{M}_i$, 
\begin{align}\label{Detection-conditions-linearized}
J_2(t) \leq J_{2,\threshold}^i, \ \ \ 
J_\infty(t) \leq J_{\infty,\threshold}^i,
\end{align}
where
\begin{align*}
 J_{2, \threshold}^i &= \left \| \sqrt{ W_i} \right \| \delta , \ \ W_i = \int_0^\infty  e^{\tilde A_i^\top t} C_i^\top C_i e^{\tilde A_i t} dt, \\
J_{\infty, \threshold}^i &=  \sup_\tau \left\{ \left \|   C_i e^{\tilde A_i \tau}  \right\| \delta  , \; 0 < \tau < \infty \right\} . 
\end{align*}
\end{lem}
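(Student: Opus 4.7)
The plan is to exploit the closed-form expression $r(\tau) = C_i e^{\tilde A_i \tau}\tilde x(0)$, which is already derived immediately before the lemma, and then bound the two residual-evaluation functions using standard estimates of quadratic forms and induced matrix norms. Detectability of $(A,C_i)$ is the only structural hypothesis needed: it guarantees that $L_i$ can be chosen so that $\tilde A_i = A - L_iC_i$ is Hurwitz, which in turn ensures that the improper integral defining $W_i$ converges and that $W_i$ is a well-defined symmetric positive semi-definite matrix.

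First I would handle $J_2$. Substituting the closed-form residual into the definition gives
\begin{align*}
J_2(t)^2 = \int_0^t \tilde x(0)^\top e^{\tilde A_i^\top \tau} C_i^\top C_i e^{\tilde A_i \tau} \tilde x(0)\, d\tau \leq \tilde x(0)^\top W_i \tilde x(0),
\end{align*}
where the inequality follows because the integrand is positive semi-definite-valued and extending the range of integration to $[0,\infty)$ only adds non-negative terms. Bounding the quadratic form by $\lambda_{\max}(W_i)\|\tilde x(0)\|^2$ and using $\|\tilde x(0)\| \leq \delta$ together with the identity $\|\sqrt{W_i}\| = \sqrt{\lambda_{\max}(W_i)} = \sqrt{\|W_i\|}$ (valid since $W_i \succeq 0$) yields $J_2(t) \leq \|\sqrt{W_i}\|\delta = J_{2,\threshold}^i$.

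The treatment of $J_\infty$ is even more direct: for every $\tau \in (0,t)$,
\begin{align*}
\|r(\tau)\| = \|C_i e^{\tilde A_i \tau}\tilde x(0)\| \leq \|C_i e^{\tilde A_i \tau}\|\cdot\|\tilde x(0)\| \leq \|C_i e^{\tilde A_i \tau}\|\,\delta,
\end{align*}
by submultiplicativity of the spectral norm. Taking the supremum over $0 < \tau < t$ and then enlarging the supremum range to $(0,\infty)$ (which can only make the bound larger) gives $J_\infty(t) \leq J_{\infty,\threshold}^i$.

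I do not anticipate a serious obstacle here; the argument is essentially a specialization of Lemma~\ref{Threshold-lemma} to the piecewise-linear setting, where the availability of the matrix exponential closed-form replaces the Lyapunov estimate. The only mildly delicate point is establishing that $W_i$ is finite and that the supremum in $J_{\infty,\threshold}^i$ is attained as a finite number; both are immediate consequences of $\tilde A_i$ being Hurwitz, which the detectability hypothesis furnishes through an appropriate design of $L_i$. Beyond this, I would briefly remark that the thresholds depend only on design quantities $(A,C_i,L_i,\delta)$ and hence can be pre-computed offline for each regime $\mathcal{M}_i$, which makes the resulting decision logic~\eqref{detection-logic} practical for real-time monitoring.
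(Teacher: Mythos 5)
Your proposal is correct and follows essentially the same route as the paper's own proof: both substitute the closed-form residual $r(\tau)=C_i e^{\tilde A_i\tau}\tilde x(0)$, extend the integration/supremum range to $[0,\infty)$, and bound the resulting quadratic forms (the paper writes the $J_2$ bound as $\sup_{\|\tilde x(0)\|\le\delta}\sqrt{\tilde x^\top(0)W_i\tilde x(0)}$, which equals your $\sqrt{\lambda_{\max}(W_i)}\,\delta=\|\sqrt{W_i}\|\delta$). Your added remark that detectability furnishes a Hurwitz $\tilde A_i$, guaranteeing convergence of $W_i$ and finiteness of the supremum, is a welcome piece of rigor that the paper leaves implicit.
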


\begin{proof}
We have
\begin{align*}
J_2(t) &\leq  \sup_{\| \tilde x (0) \| \leq \delta} \sqrt{ \int_0^\infty \|r (\tau)\|^2 d \tau } \\
&=  \sup_{\| \tilde x (0) \| \leq \delta} \sqrt{ \int_0^\infty \tilde x(0)^\top e^{\tilde A_i^\top t} C_i^\top C_i e^{\tilde A_it} \tilde x(0) dt } \\
&= \sup_{\| \tilde x (0) \| \leq \delta} \sqrt{ \tilde x(0)^\top \underbrace{ \int_0^\infty  e^{\tilde A_i^\top t} C_i^\top C_i e^{\tilde A_it} dt }_{W_i} \tilde x(0) } \\
&= \left \| \sqrt{ W_i} \right \| \delta.
\end{align*}
Similarly,
\begin{align*}
J_\infty(t)&  \leq \sup_{\| \tilde x (0) \| \leq \delta} \left\{ \|r (\tau)\|  , 0 \leq \tau < \infty \right\}\\
&= \sup_{\| \tilde x (0) \| \leq \delta} \left\{ \sqrt{ \tilde x(0)^\top  e^{\tilde A_i^\top \tau} C_i^\top C_i e^{\tilde A_i \tau} \tilde x(0) } ,  0 \leq \tau < \infty \right\} \\
&\leq \sup_\tau \left\{ \left \|  C_i e^{\tilde A_i \tau} \right \| \delta  , 0 \leq \tau < \infty \right\}. 
\end{align*}
This completes the proof.
\end{proof}

Note that the notation $\|\cdot\|$ is slightly abused above to also represent the induced 2-norm of a matrix, yet without causing confusion from the context. Furthermore, it is known that $W_i$ is the observability Gramian for Eq.~\eqref{State-residual-when-no-fault}, satisfying the following Lyapunov equation:
\begin{align*} 
\tilde A_i^\top W_i + W_i \tilde A_i = - C_i^\top C_i. 
\end{align*} 
By solving this Lyapunov equation, one can directly compute $W_i$ and then $ J_{2,\threshold}^i$. Finally, by leveraging Lemma~\ref{Threshold-lemma-linearized}, we can employ the decision logic in Eq.~\eqref{detection-logic} to perform TR and ISC detection.

We have the following remarks to address some practical considerations in implementing the above fault detection approach based on the piecewise linearized models. First, to determine $\mathcal{M}_i$ corresponding to the operating condition of the LiB cell, the current state can be continuously estimated using Eq.~\eqref{State-observer-linearized} when no fault has been identified. Second, inevitable model inaccuracy and measurement noises can mildly alter the behavior of the residual signals to cause $J_2$ and $J_\infty$ to exceed their theoretical values under normal operating conditions, potentially triggering false alarms. To alleviate this, we suggest to modify $J_2$ as
\begin{align}\label{eqn:J2-with-forgetting-factor}
J_2(t) = \sqrt{\int_0^t \eta^{(t - \tau)} \| r (\tau)\|^2 \, d\tau}, 
\end{align}
where $0<\eta<1$ is a forgetting factor to give exponentially decreasing weight to older residual values. In the discrete-time domain, $J_2$ can be recursively computed via
\begin{align*}
J_2 (k+1 ) = \sqrt{ \eta^{\Delta  t} J_2^2(k)+\| r (k+1)\|^2 \Delta t },
\end{align*}
where $k$ is the discrete time index and $\Delta t$ is the sampling period. The choice of $\eta$ should balance the accuracy and robustness against uncertainty in fault detection; typically, $\eta$ is chosen to be slightly less than 1. In general, $J_\infty$ is less sensitive to the impact of model and measurement uncertainty. However, it is still beneficial to slightly increase $J_{\infty, \threshold}^i$ to make the fault detection more robust. Finally, for the sake of conservatism, one can select $J_{p,\threshold}$ for $p=2$ or $\infty$ as 
\begin{align*}
  J_{p,\threshold} = \max_{i} \left\{ J_{p,\threshold}^i, i=1,2,\ldots, m \right\}. 
\end{align*}
This conservative choice will help reduce the chance of false alarming in practical applications.

\section{Validation of TR and ISC Detection} \label{Sec:TR-ISC-Fault-Detection-Validation}

This section validates the TR and ISC detection approach proposed in Section~\ref{Sec:TR-ISC-Fault-Detection}. We perform the validation using both simulation and experimental data, aligned with the model validation presented in Section~\ref{sec:BattBee-Model-Validation}.

\subsection{Simulation-Based Validation of TR and ISC Detection}

Our first validation leverages the BattBee model identified in Section~\ref{sec:BattBee-Model-Validation-Sim} as well as the corresponding simulation datasets. We focus on evaluating $J_2$ and choose $J_{2,\threshold}$ as in Lemma~\ref{Threshold-lemma-linearized} using the piecewise-linearized version of the BattBee model. The evaluation of $J_2$ follows Eq.~\eqref{eqn:J2-with-forgetting-factor}, employing a forgetting factor $\eta=0.95$ to enhance the robustness of the detection against model and measurement uncertainty, as highlighted in Section~\ref{Sec:TR-ISC-Fault-Detection}. In the simulation, the initial state estimation error is upper bounded by $  \begin{bmatrix} 0.01 & 0.01 & 0.1 & 0.1 \end{bmatrix}^\top$.  
The fault detection observer follows Eq.~\eqref{eqn:J2-with-forgetting-factor}, where a Kalman gain is used for its optimality and convenience. 

\begin{figure}[t]
 \begin{center}
\includegraphics[width=0.46\textwidth]{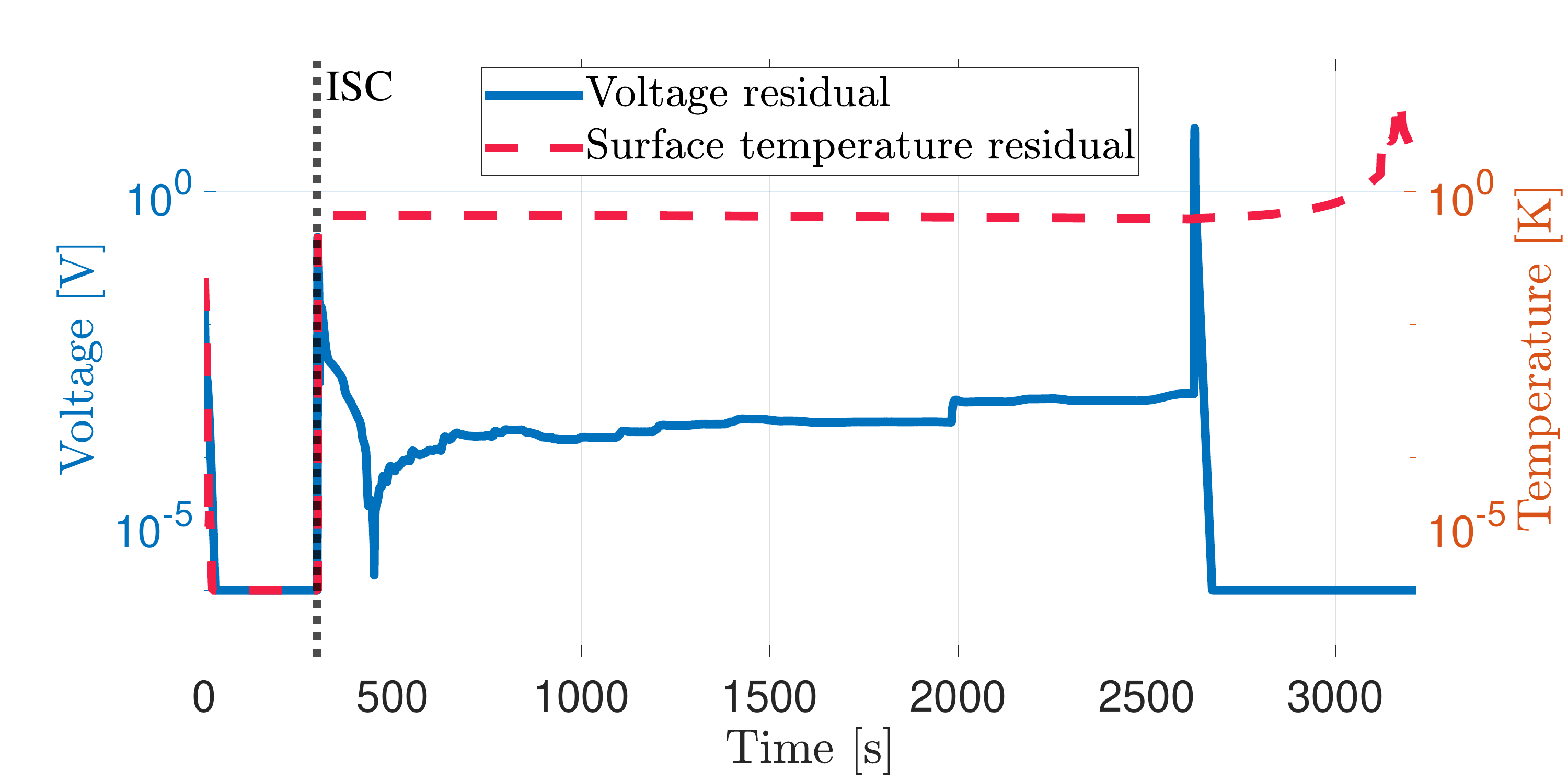}\\
 \subcaption[]{Residual signals for voltage and temperature estimates.}
 \label{fig:Sim_Val_FD_r} 
 \includegraphics[width=0.46\textwidth]{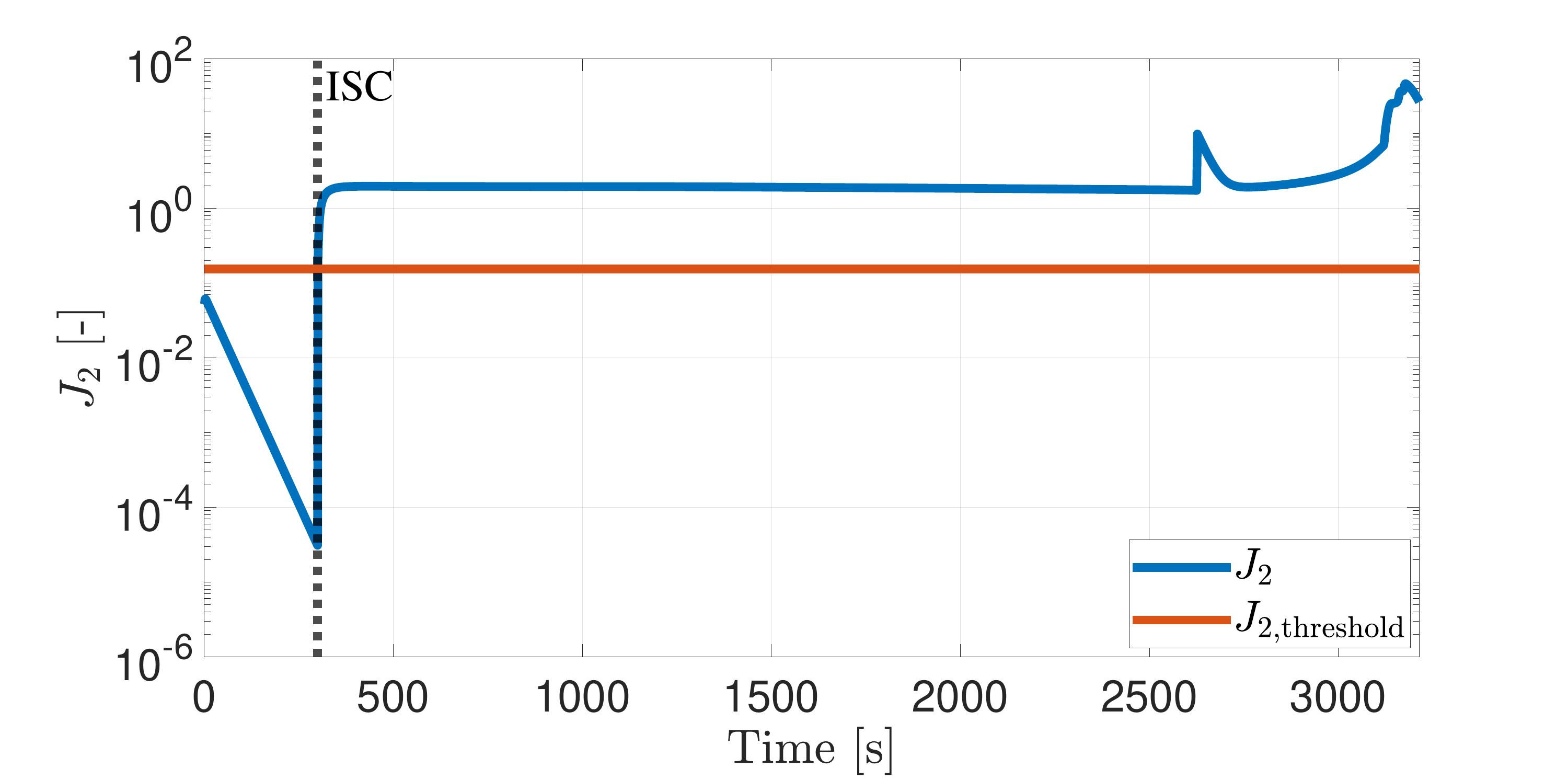}\\
 \subcaption[]{Fault detection result based on the residual evaluation function ($J_2$) and threshold.}
 \label{fig:Sim_Val_FD_J} 
 \includegraphics[width=0.46\textwidth]{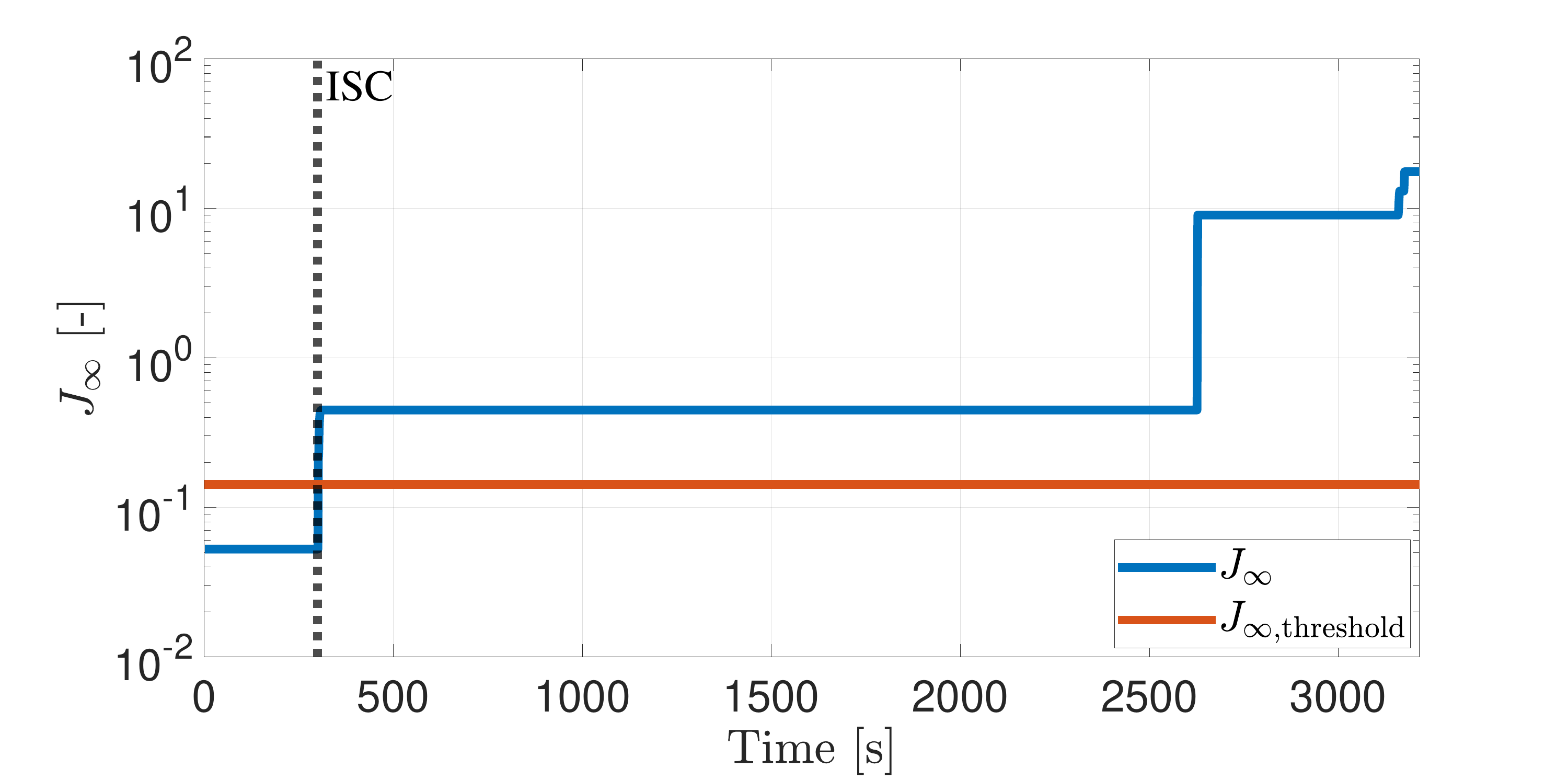}\\
 \subcaption[]{Fault detection result based on the residual evaluation function ($J_\infty$) and threshold.}
 \label{fig:Sim_Val_FD_J_inf}
 \end{center}
\caption{Simulation-based validation of the model-based fault detection system in an ISC-induced TR event.}
\label{fig:Sim_Val_FD}
\end{figure} 

 {Fig.}~\ref{fig:Sim_Val_FD} presents the validation results.
The ISC begins at the $300$-th second, causing a gradual decline in the terminal voltage and a rise in the surface temperature. The ISC becomes crucially severe at the $2623$-th second to induce exothermic reactions, with the TR unfolding completely.  
{Fig.}~\ref{fig:Sim_Val_FD_r} illustrates the residual signals for the terminal voltage and surface temperature, generated by the fault detection observer. On the initial occurrence of the ISC, the residuals for both exhibit a noticeable jump and remain higher than usual during the process. When the ISC becomes fully fledged, the residual for the terminal voltage plummets; meanwhile, the residual for the temperature shows a sharp increase, albeit with a delay of about $100$ seconds. The fault detection results are displayed in Fig.~\ref{fig:Sim_Val_FD_J} and~\ref{fig:Sim_Val_FD_J_inf}. As can be seen, $J_2$ and $J_\infty$  exceed $J_{2,\threshold}$ and $J_{\infty,\threshold}$ immediately upon the initial ISC occurrence, thereby triggering a fault alarm. Subsequently, $J > J_{\threshold}$ hold continuously, providing persistent indication of the ISC and impending TR event. When the full ISC occurs, $J_2$ and $J_\infty$ exceed $J_{2,\threshold}$ and $J_{\infty,\threshold}$ by even larger margins. We emphasize that the fault is detected well before the TR fully unfolds, implying the model’s capability for early and reliable fault detection.

\subsection{Experimental Validation of TR and ISC Detection}

For this validation, we use experimental datasets collected by Sandia National Laboratories~\cite{Lin:JES:2023}, along with the BattBee model identified in Section~\ref{BattBee-Model-Governing-Equations}. The detection approach uses the piecewise linearized version of the BattBee model and evaluates $J_2$ according to Eq.~\eqref{eqn:J2-with-forgetting-factor} with a forgetting factor of $\eta = 0.95$. 

\begin{figure}[t]
 \begin{center}
 \includegraphics[width=0.46\textwidth]{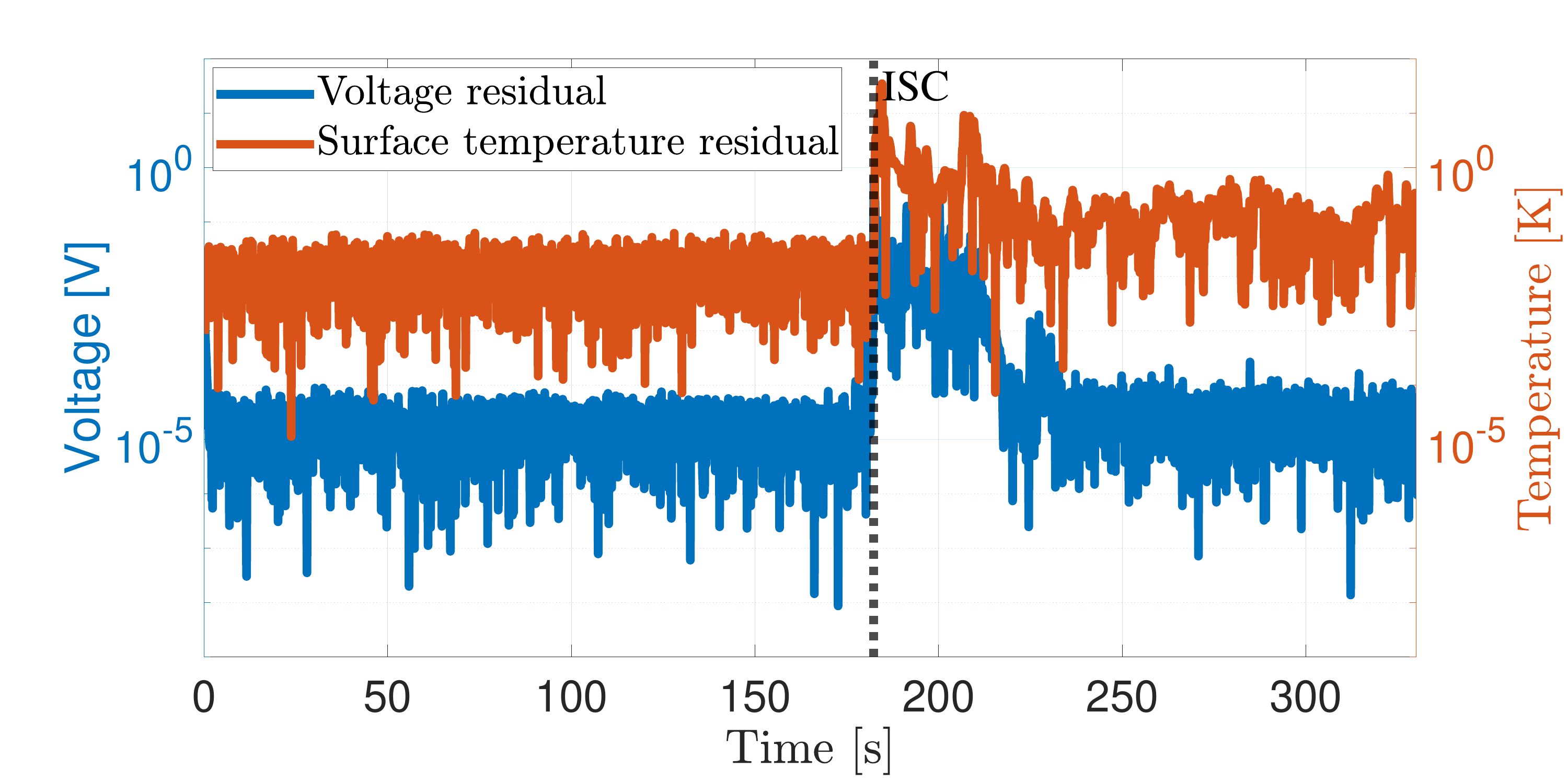}\\
 \subcaption[]{Residual signals for voltage and temperature estimates.}
 \label{fig:Exp_Val_FD_r}
 \includegraphics[width=0.46\textwidth]{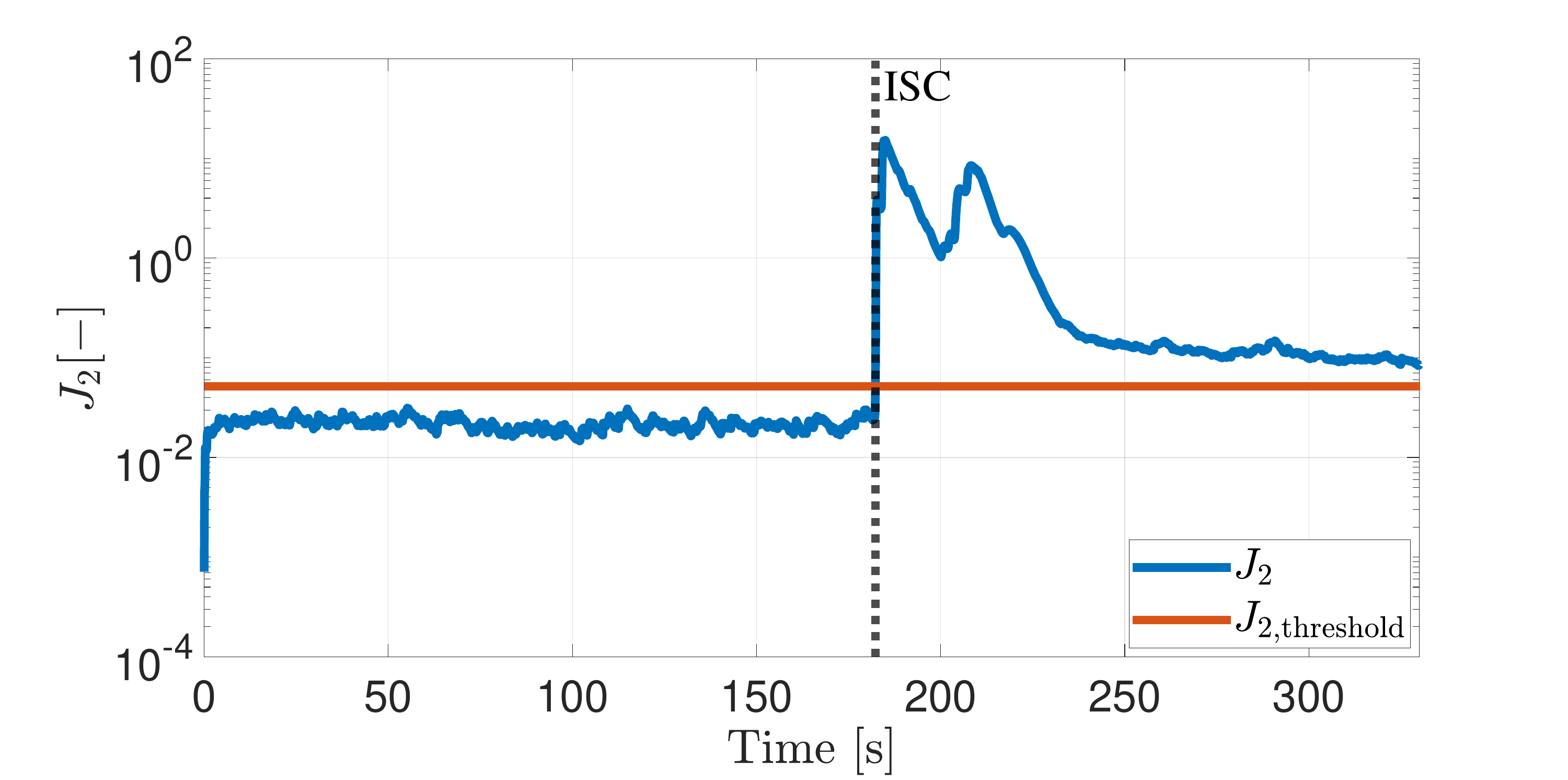}\\
 \subcaption[]{Fault detection result based on the residual evaluation function ($J_2$) and threshold.}
 \label{fig:Exp_Val_FD_J}
 \includegraphics[width=0.46\textwidth]{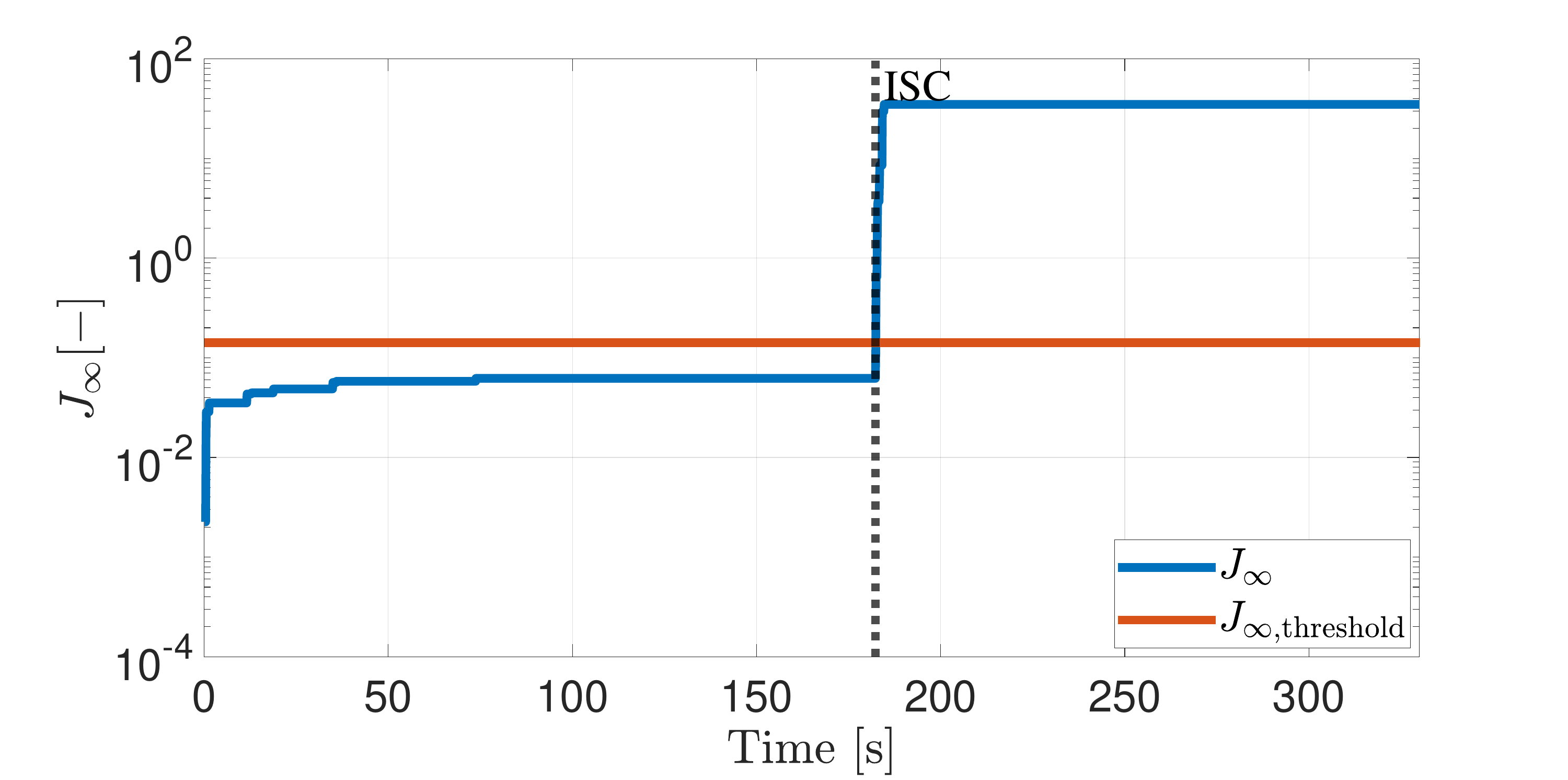}\\
 \subcaption[]{Fault detection result based on the residual evaluation function ($J_\infty$) and threshold.}
 \label{fig:Exp_Val_FD_J_inf}
 \end{center}
\caption{Experimental validation of the model-based fault detection system in an ISC-induced TR event.}
\label{fig:Exp_Val_FD}
\end{figure}
Fig.~\ref{fig:Exp_Val_FD} presents the validation results.  
The indenter induces an ISC condition in the cell at around the $182$-th second. This condition deteriorates, ultimately leading to a complete failure, as evidenced by the rapid voltage drop. Concurrently, the cell temperature rises sharply, initiating a TR event. The BattBee model demonstrates strong agreement with the experimental data, confirming its predictive accuracy. When the fault detection observer is applied to the BattBee model with the experimental data, Fig.~\ref{fig:Exp_Val_FD_r} displays the resulting residuals for terminal voltage and surface temperature. Although the residuals exhibit fluctuations due to model and measurement uncertainty, they consistently rise to higher levels following the onset of the ISC. In Fig.~\ref{fig:Exp_Val_FD_J} and~\ref{fig:Exp_Val_FD_J_inf}, $J_2$ and $J_\infty$ are evaluated continuously, which surpass $J_{2,\threshold}$ and $J_{\infty,\threshold}$ right after the ISC occurs, leading to an effective detection of both the ISC and the subsequent TR event.

\begin{figure}[t]
 \begin{center} 
 \includegraphics[width=0.46\textwidth]{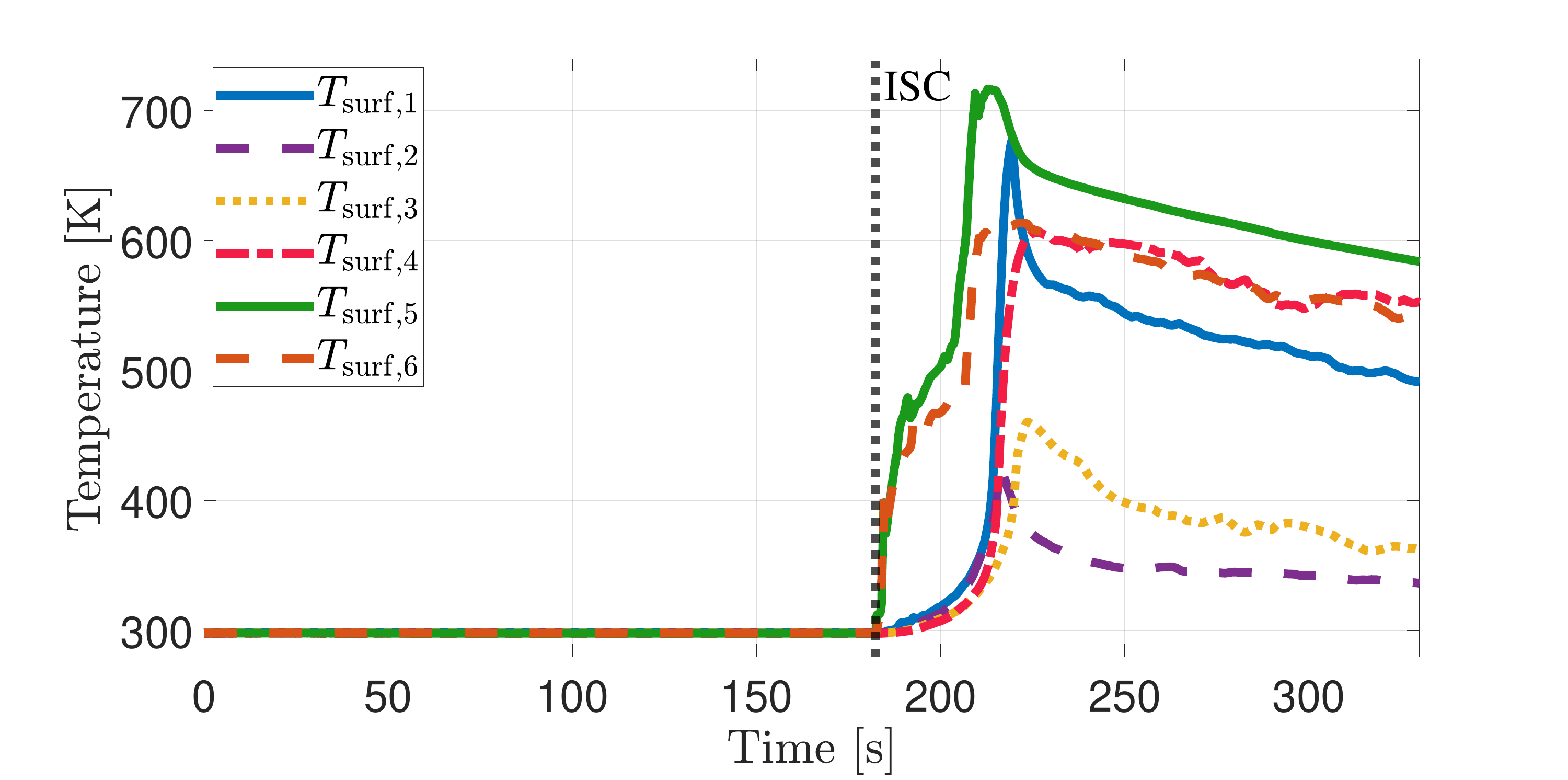}\\
 \caption{Temperature evolution at six surface locations of a pouch cell under localized ISC-induced heating.}\label{Fig:Exp_Val_FD_VaryT_T}
 \end{center}
\end{figure}

For the pouch cell under testing, the surface temperature was measured at six different locations: $T_\mathrm{surf,1}$ near the positive terminal, $T_\mathrm{surf,2}$ near the negative terminal, $T_\mathrm{surf,3}$ at the bottom, $T_\mathrm{surf,4}$ at the top, $T_\mathrm{surf,5}$ above the penetration point, and $T_\mathrm{surf,6}$ below the penetration point. {Fig.}~\ref{Fig:Exp_Val_FD_VaryT_T} illustrates the temperature evolution at these six locations. Each measurement exhibits different response speeds and magnitudes of temperature rise, reflecting the localized heating effects caused by the ISC. In general, the closer a sensor is to the penetration point, the more rapidly the temperature increases. Nonetheless, all locations experience substantial temperature escalation.

\begin{figure}[t]
 \begin{center} 
 \includegraphics[width=0.46\textwidth]{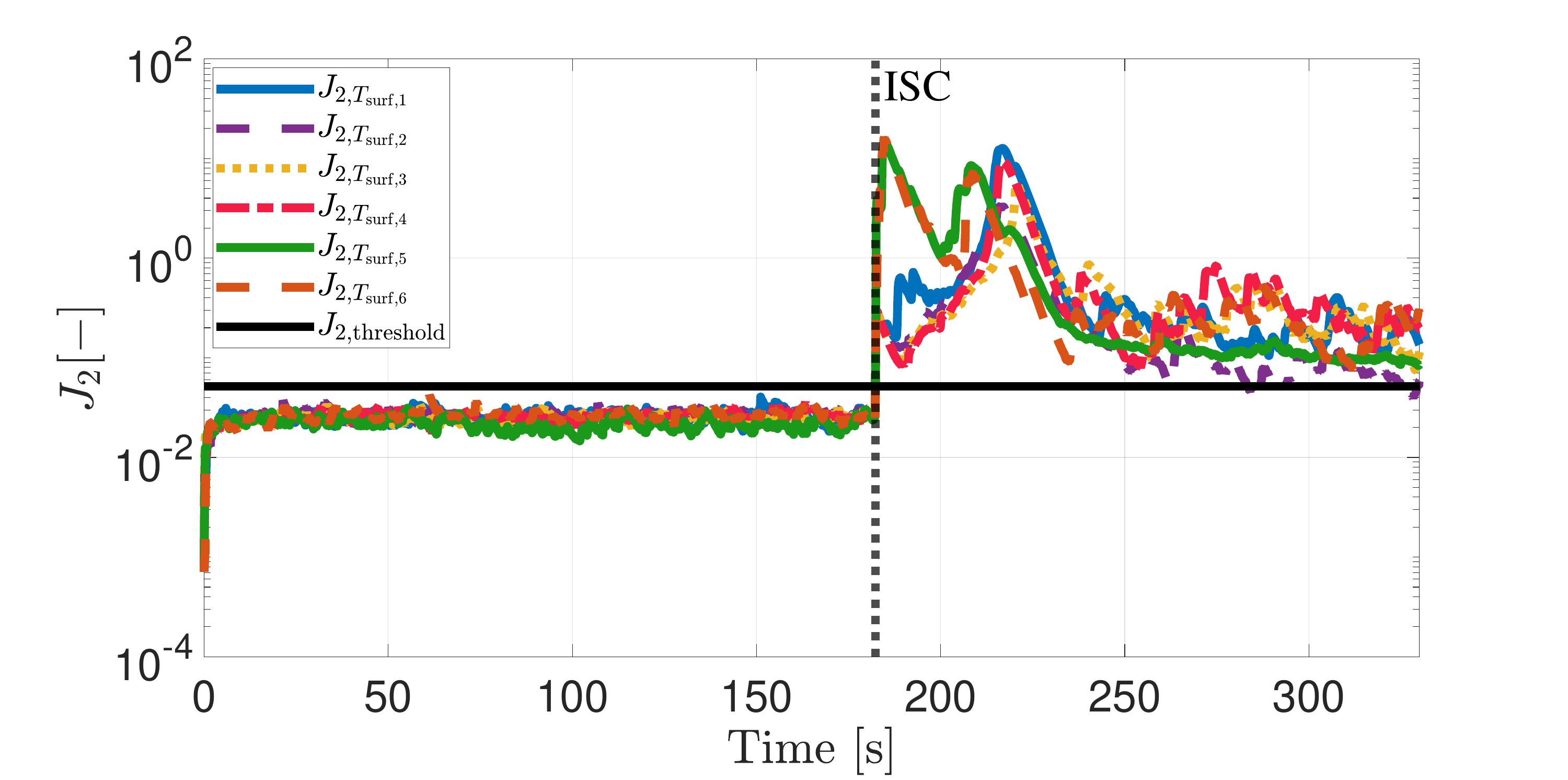}\\
 \subcaption[]{Fault detection result based on the residual evaluation function ($J_2$) and threshold.}
 \label{Fig:Exp_Val_FD_VaryT_J_2}
 \includegraphics[width=0.46\textwidth]{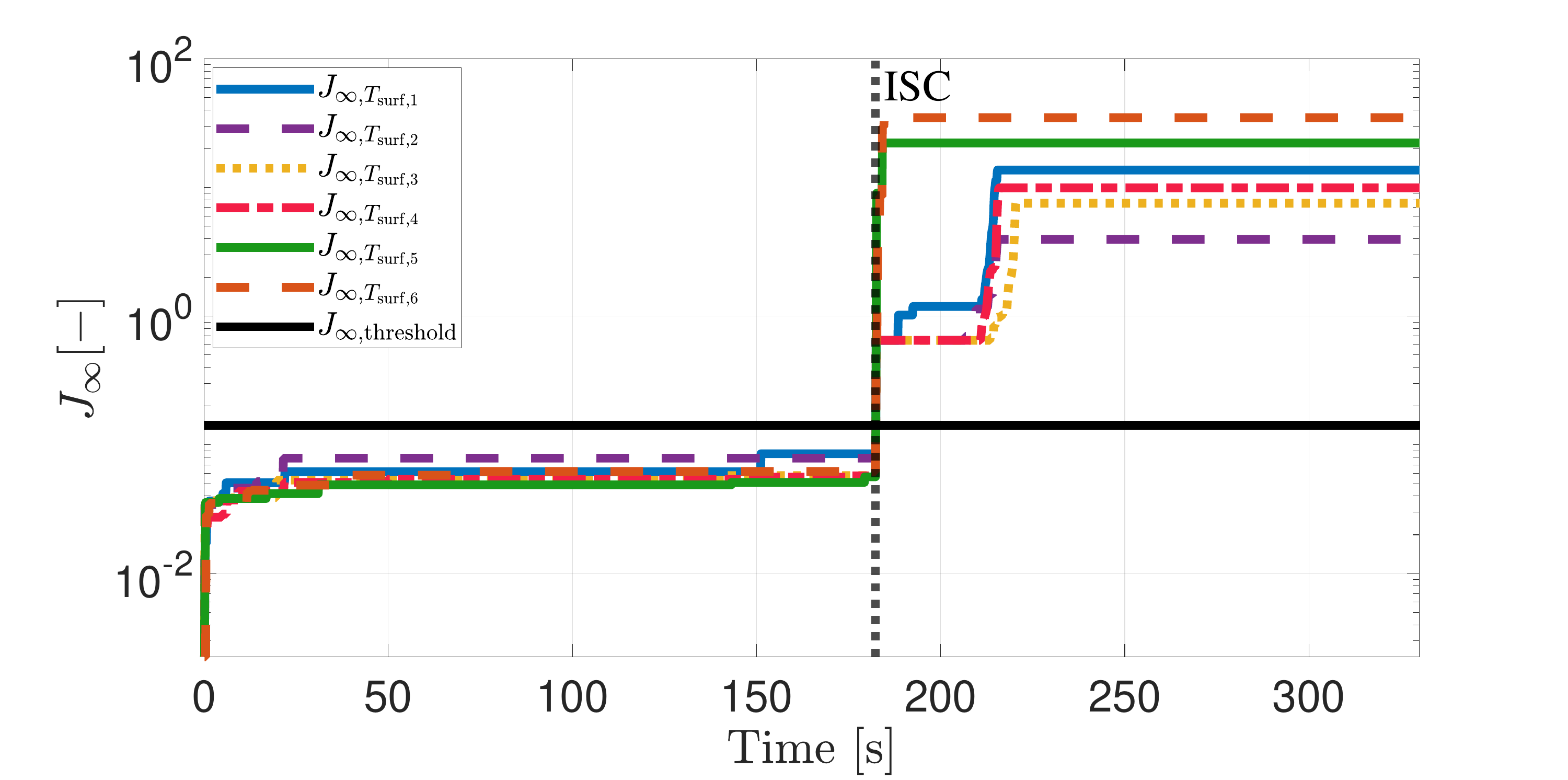}\\
 \subcaption[]{Fault detection result based on the residual evaluation function ($J_\infty$) and threshold.}
 \label{Fig:Exp_Val_FD_VaryT_J_inf}
 \end{center}
\caption{Fault detection results using temperature measurements from six surface locations.}
\label{Fig:Exp_Val_FD_VaryT_J}
\end{figure}
 {Fig.}~\ref{Fig:Exp_Val_FD_VaryT_J} presents the fault detection performance using the temperature measurements from each of these locations by evaluating the corresponding residual evaluation functions. In all cases, the ISC and subsequent TR are successfully detected. Notably, detection is achieved even with relatively milder temperature increases observed in $T_\mathrm{surf,1}$, $T_\mathrm{surf,2}$, $T_\mathrm{surf,3}$, and $T_\mathrm{surf,4}$, despite their locations farther from the penetration point. Across all locations, detection occurs more than $30$ seconds prior to the cell reaching its peak temperature. This lead time enhances safety assurance and helps mitigate the consequences of TR on users and surrounding assets. 

 In summary, the validation results demonstrate the robustness and effectiveness of the proposed detection approach. Both ISC and TR conditions are reliably identified across all cases, with lead times of tens of seconds or more before the full-scale outbreak of the event. Such early warning capability is critical for ensuring user safety in practice. Moreover, the detection approach is computationally fast, resulting from the simplicity of both the BattBee model and the observer. 

\section{Conclusion} \label{Sec:Conclusion}

Safety is at high stakes as LiBs are enabling the sweeping advancement of electric vehicles and will continue to power future industries and societies. However, LiBs have yet to be fully safe, even though they provide high energy density and long cycle life compared to other battery technologies. Critical concerns lie in the threat of TR events that are often triggered by ISCs. This challenge has prompted ever-growing research aimed at developing effective solutions. While the literature on ISC-induced TR modeling is increasing, there is still a lack of dynamic models with structural simplicity and computational efficiency necessary for practical battery management systems.

In this paper, we propose a novel equivalent circuit model called   BattBee  to fill this gap. By design, the BattBee model is a circuit analog simplifying an electrochemical model (SPM) coupled with an ISC mechanism. With this physical interpretability, the model is capable of simulating various key phenomena arising in TR initiated by ISC conditions---including rapid charge depletion, voltage collapse, heat generation and buildup, and temperature escalation. 
Building on the BattBee model, we further propose an observer-based fault detection approach to achieve real-time monitoring and detection of ISC and TR events. This approach includes explicit detection criteria and decision logics, well-suited for practical implementation. Validation using both simulation and experimental data demonstrates that the BattBee model delivers high fidelity and predictive accuracy, while the detection framework effectively enables early detection of ISC and TR conditions.

Looking ahead, there are several meaningful directions to expand this study. First, it is of interest to further enhance the descriptive capabilities of the BattBee model. For instance, the model can be modified to incorporate SoC- and temperature-dependent internal resistance. Drawing upon~\cite{Biju:AE:2023}, the BattBee model can be refined to account for more complex dynamics and capture ISC-induced TR behavior under high C-rates, with potential applications to electric aircraft. 
 Second, although the BattBee model is a 0D model capable of demonstrating overall voltage and surface temperature dynamics, it does not fully capture spatial temperature or current density distribution  
observed during an ISC or TR event. A potential approach to address this limitation is through multi-region lumped modeling. Note that the associated increase in model complexity, additional measurement requirements, and difficulty in parameter identification would also require further research. 
Third, for both modeling and fault detection, integrating physics with machine learning is likely to open up new avenues for improving accuracy and practical deployability, as suggested by some recent studies, e.g.,~\cite{Tu:AE:2023}. 
 Fourth, since the BattBee model  
neglects the influence of battery degradation, incorporating battery aging dynamics can be another useful direction to improve its descriptive capability and applicability throughout the cell's lifetime. 
Finally, future developments should be validated through experiments on LiB cells with different geometries and chemistries.

\section*{Declaration of Competing Interest}
The authors declare that they have no known competing financial interests or personal relationships that could have appeared to influence the work reported in this paper. 

\section*{Data Availability}
Data will be made available on request. 

\section*{Acknowledgement}
This work was supported in part by the U.S. Department of Energy under Award DE-EE0010404, by the U.S. National Science Foundation under Award CMMI-1847651, and by Gamma Technologies.

\bibliographystyle{Elsevier_Numbered}
\bibliography{BattBee}
\flushend
\end{document}